\def\O{{\mathcal{O}}}
\let\epsilon=\varepsilon
\def\vari#1{\mathit{#1}}
\DeclareMathOperator*{\argmax}{arg\,max}
\newcommand*{\nolinefrac}[2]{\genfrac{}{}{0pt}{1}{#1}{#2}}
\newcommand*\samethanks[1][\value{footnote}]{\footnotemark[#1]}
\begin{document}

\date{}
\title{Online Colored Bin Packing}
\author[1]{Martin B\"{o}hm\thanks{This work was supported by the
project 14-10003S of GA \v{C}R and by the GAUK project 548214.}\ }
\author[1]{Ji\v{r}\'{i} Sgall\samethanks\ }
\author[1]{Pavel Vesel\'{y}\samethanks\ }
\affil[1]{Computer Science Institute of Charles University, Prague, Czech Republic.
\texttt{\{bohm,sgall,vesely\}@iuuk.mff.cuni.cz}.}

\lefthyphenmin=2
\righthyphenmin=2

\newtheorem*{define}{Definition}
\newtheorem{theorem}{Theorem}[section]
\newtheorem{lemma}[theorem]{Lemma}
\newtheorem{claim}[theorem]{Claim}
\newtheorem{observation}[theorem]{Observation}
\newtheorem{corollary}[theorem]{Corollary}
\newtheorem{proposition}[theorem]{Proposition}

\maketitle

\begin{abstract}
In the Colored Bin Packing problem a sequence of items of sizes up to
$1$ arrives to be packed into bins of unit capacity. Each item has one
of $c\geq 2$ colors and an additional constraint is that we cannot
pack two items of the same color next to each other in the same bin.
The objective is to minimize the number of bins.

In the important special case when all items have size zero, we
characterize the optimal value to be equal to color discrepancy. As
our main result, we give an (asymptotically) 1.5-competitive algorithm
which is optimal. In fact, the algorithm always uses at most
$\lceil1.5\cdot\vari{OPT}\rceil$ bins and we show a matching lower
bound of $\lceil1.5\cdot\vari{OPT}\rceil$ for any value of
$\vari{OPT}\geq 2$.
In particular, the absolute ratio of our algorithm
is $5/3$ and this is optimal.
For items of arbitrary size we give a lower bound of $2.5$
and an absolutely $3.5$-competitive
algorithm.
When the items have sizes of at most $1/d$ for a real $d \geq
2$ the asymptotic competitive ratio is $1.5+d/(d-1)$.
We also show that classical algorithms First Fit, Best Fit and Worst Fit are not
constant competitive, which holds already for three colors and small items.

In the case of two colors---the Black and White Bin Packing
problem---we prove that all Any Fit algorithms have the absolute
competitive ratio $3$. When the items have sizes of at most $1/d$ for a
real $d \geq 2$ we show that the Worst Fit algorithm is absolutely
$(1+d/(d-1))$-competitive.
\end{abstract}

\section{Introduction}

In the \textit{Online Black and White Bin Packing} problem proposed by
Balogh et al.~\cite{balogh13,balogh14} as a generalization of
classical bin packing, we are given a list of items of size in $[0,
  1]$, each item being either black, or white.  The items are coming
one by one and need to be packed into bins of unit capacity.
The items in a bin are ordered by their arrival
time. The additional constraint to capacity is that the
colors inside the bins are alternating, i.e., no two items of the same color
can be next to each other in the same bin.  The goal is to minimize
the number of bins used.

\textit{Online Colored Bin Packing} is a natural generalization of
Black and White Bin Packing in which items can have more than two
colors.  As before, the only additional condition to unit capacity is that we
cannot pack two items of the same color next to each
other in one bin.

Observe that optimal offline packings with and without reordering the
items differ in this model. The packings even differ by a non-constant
factor: Let the input sequence have $n$ black items and then $n$ white
items, all of size zero. The offline optimal number of bins with
reordering is $1$, but an offline packing without reordering (or an
online packing) needs $n$ bins, since the first $n$ black items must
be packed into different bins.  Hence we need to use the offline
optimum without reordering in the analysis of online colored bin
packing algorithms.

There are several well-known and often used algorithms for classical
Bin Packing. We investigate the \textit{Any Fit} family of algorithms (AF). These algorithms
pack an incoming item into some already open bin whenever
it is possible with respect to the size and color constraints.
The choice of the open bin (if more are available) depends on the algorithm.  AF
algorithms thus open a new bin with an incoming item only when there
is no other possibility. Among AF algorithms, \textit{First Fit}
(FF) packs an incoming item into the first bin where it fits (in the
order by creation time), \textit{Best Fit} (BF) chooses the bin with
the highest level where the item fits and \textit{Worst Fit} (WF)
packs the item into the bin with the lowest level where it fits.

\textit{Next Fit} (NF) is more restrictive than Any Fit algorithms,
since it keeps only a single open bin and puts an incoming item into
it whenever the item fits, otherwise the bin is closed and a new one
is opened.

\medskip

\textbf{Previous results.}
Balogh et al.~\cite{balogh13,balogh14} introduced the Black and White
Bin Packing problem.  As the main result, they give an algorithm
{\it Pseudo} with the absolute competitive ratio exactly $3$ in the general
case and $1+d/(d-1)$ in the parametric case, where the items
have sizes of at most $1/d$ for a real $d \geq 2$.  They also proved
that there is no deterministic or randomized online algorithm whose
asymptotic competitiveness is below $1 + \frac{1}{2 \ln 2} \approx
1.721$.

Concerning specific algorithms, they proved that Any Fit algorithms are at most
$5$-competitive and even optimal for zero-size items.  They show input
instances on which FF and BF create asymptotically $3 \cdot
\vari{OPT}$ bins.  For WF there are sequences of items witnessing that it
is at least $3$-competitive and $(1+d/(d-1))$-competitive in
the parametric case for an integer $d \geq 2$.
Furthermore, NF is not constant competitive. 

The idea of the algorithm Pseudo, on which we build as well, is that
it first packs the items regardless of their size, i.e., treating their
size as zero. This can be done optimally for two colors, and the optimum equals the
maximal discrepancy in the sequence of colors (to be defined
below). Then these bins are partitioned by NF into bins of level at
most 1.

In the offline setting, Balogh et al.~\cite{balogh13} gave a $2.5$-approximation
algorithm with $\O(n \log n)$ time complexity and an asymptotic
polynomial time approximation scheme, both when reordering is allowed.

Very recently and independent of us D\'osa and Epstein~\cite{DosEps14}
studied Colored Bin Packing. They improved the lower bound for online
Black and White Bin Packing to $2$ for deterministic algorithms, 
which holds for more colors as well. For at least $3$ colors they
proved an asymptotic lower bound of $1.5$ for zero-size items.
They designed an absolutely 4-competitive algorithm based on Pseudo
and an absolutely $2$-competitive balancing algorithm for
zero-size items. They also showed that BF, FF and WF are not
competitive at all (with non-zero sizes).

\medskip

\textbf{Our results.}
We completely solve the case of Colored Bin Packing for zero-size
items. As we have seen, this case is important for constructing
general algorithms.
The offline optimum (without reordering) is actually not only
lower bounded by the color discrepancy, but equal to it for zero-size
items (see Section~\ref{sec:Prelim}). For online
algorithms, we give an (asymptotically) 1.5-competitive algorithm
which is optimal (see Section~\ref{sec:zeroSizeUB}).
In fact, the algorithm always uses at most
$\lceil1.5\cdot\vari{OPT}\rceil$ bins and we show a matching lower
bound of $\lceil1.5\cdot\vari{OPT}\rceil$ for any value of
$\vari{OPT}\geq 2$ (see Section~\ref{sec:zeroSizeLB}).
This is significantly stronger than the asymptotic
lower bound of $1.5$ of D\'osa and Epstein~\cite{DosEps14}, in
particular it shows that the absolute ratio of our algorithm is $5/3$,
and this is optimal.

For items of arbitrary size and three colors, we show a lower bound of
$2.5$, which breaks the natural barrier of 2 (see Section~\ref{sec:anySizeLB}).
We use the optimal algorithm for zero-size items and the algorithm Pseudo to design an
(absolutely) $3.5$-competitive algorithm which is also
(asymptotically) $(1.5+d/(d-1))$-competitive in the parametric case,
where the items have sizes of at most $1/d$ for a real $d \geq 2$
(see Section~\ref{sec:anySizeUB}). (Note
that for $d<2$ we have $d/(d-1)>2$ and the bound for arbitrary items
is better.)

We show that algorithms BF, FF and WF are not
constant competitive, in contrast to their $3$-competitiveness for two colors.
Their competitiveness cannot be
bounded by any function of the number of colors even for only three
colors and very small items (see Section~\ref{sec:AFalgsOnColors}).

For Black and White Bin Packing,
we improve the upper bound on the
absolute competitive ratio of Any Fit algorithms in the general case
to $3$ which is tight for BF, FF and WF (see Section~\ref{sec:AF-BWBP}).  For WF in the parametric
case, we prove that it is absolutely $(1+d/(d-1))$-competitive for a real $d\geq 2$
which is tight for an integral $d$ (see Section~\ref{sec:WF-BWBP}).
Therefore, WF has the same competitive ratio as the Pseudo algorithm.

\medskip

\textbf{Related work.}
In the classical Bin Packing problem, we are given items with sizes in
$(0,1]$ and the goal is to assign them into the minimum number of
unit capacity bins. The problem was proposed by
Ullman~\cite{ullman71} and by Johnson~\cite{johnson73} and it is
known to be NP-hard. See the survey of Coffman et al.~\cite{coffman13} for the many
results on classical Bin Packing and its many variants.

For the online problem, there is no online algorithm which is better
than $248/161\approx 1.540$-competitive~\cite{balogh12}.  The
currently best algorithm is Harmonic++ by Seiden~\cite{seiden02},
approximately $1.589$-competitive.  Regarding AF algorithms, NF is
2-competitive and both FF and BF have the absolute competitive ratio
exactly 1.7~\cite{dosa13,dosa14}.  This is similar to Black and White
Bin Packing in which FF and BF have the absolute competitive ratio of 3
and the hard instances proving tightness of the bound are the same for
both algorithms.

In the context of Colored Bin Packing, we are interested in variants
that further restrict the allowed packings. Of particular interest is
Bounded Space Bin Packing where an algorithm can have only $K \geq 1$
open bins in which it is allowed to put incoming items.  When a bin is
closed an algorithm cannot pack any further item in the bin or open it
again.  Such algorithms are called \textit{$K$-bounded-space}. The
champion among these algorithms is $K$-Bounded Best Fit, i.e., Best
Fit with at most $K$ open bins, which is (asymptotically)
$1.7$-competitive for all $K \geq 2$~\cite{csirik01}.  Lee and
Lee~\cite{lee85} presented Harmonic($K$) which is $K$-bounded-space
with the asymptotic ratio of $1.691$ for $K$ large enough. Lee and Lee also
proved that there is no bounded space algorithm with a better asymptotic
ratio.

The Bounded Space Bin Packing is an especially interesting variant in
our context due to the fact that it matters whether we allow the
optimum to reorder the input instance or not. If we allow reordering
for Bounded Space Bin Packing, we get the same optimum as classical
Bin Packing. In fact, all the bounds on online algorithms in the
previous paragraph hold if the optimum with reordering is considered,
which is a stronger statement than comparing to the optimum without
reordering. This is a very different situation than for Colored Bin
Packing, where no online algorithms can be competitive against the
optimum with reordering, as we have noted above.

The bounded space offline optimum without reordering was studied by
Chro\-bak et al.~\cite{chrobak11}. It turns out that the computational
complexity is very different: There exists an offline
$(1.5+\varepsilon)$-approximation algorithm for 2-bounded-space Bin
Packing with polynomial running time for every constant
$\varepsilon>0$, but exponential in $\varepsilon$. No polynomial time
2-bounded-space algorithm can have its approximation ratio better than
$5/4$ (unless $P=NP$). In the online setting it is open whether there exists a
better algorithm than $1.7$-competitive $K$-Bounded Best Fit when
compared to the optimum without reordering; the current lower bound is
$3/2$.

Another interesting variant with restrictions on the contents of a bin
is Bin Packing with Cardinality Constraints, which restricts
the number of items in a bin to at most $k$ for a parameter $k \geq
2$. It was introduced by Krause et al.~\cite{krause75} who also showed
that Cardinality Constrained FF has the asymptotic ratio of at most
$2.7-2.4/k$. Interestingly, the lower bound for the asymptotic
competitive ratio for large $k$ is $1.540$~\cite{balogh12}, i.e., the
same as for standard Bin Packing. For $k\geq 3$, there is an
asymptotically $2$-competitive online algorithm~\cite{babel04} and the
absolute competitive ratio is at least 2 for any $k\geq
4$~\cite{dosa14bpcc}.  Better algorithms and various lower bounds are
known for small $k$~\cite{epstein06,babel04}.

\medskip

\textbf{Motivation.}
Suppose that a television or a radio station maintains several
channels and wants to assign a set of programs to them. The programs
have types like ``documentary'', ``thriller'', ``sport'' on TV,
or music genres on radio. To have a fancy schedule
of programs, the station does not want to broadcast two programs of the
same type one after the other. Colored Bin Packing can be used to
create such a schedule. Items here correspond to programs, colors to
genres and bins to channels.  Moreover, the programs can appear online
and have to be scheduled immediately, e.g., when listeners send
requests for music to a radio station via the Internet.

Another application of Colored Bin Packing comes from software which
renders user-generated content (for example from the
Internet) and assigns it to columns which are to be displayed.
The content is in boxes of different colors and we do not want
two boxes of the same color to be adjacent in a column,
otherwise they would not be distinguishable for the user.

Moreover, Colored Bin Packing with all items of size zero corresponds
to a situation in which we are not interested in loads of bins
(lengths of the schedule, sizes of columns, etc.), but we just want
some kind of diversity or colorfulness.

\section{Preliminaries and Offline Optimum}\label{sec:Prelim}

\textbf{Definitions and notation.}
There are three settings of Colored Bin Packing:
In the \textit{offline setting} we are given the items in advance
and we can pack them in an arbitrary order.
In the \textit{restricted offline setting} we also know sizes and colors of all items in advance,
but they are given as a sequence and they need to be packed in that order.
In the \textit{online setting} the items are coming one by one and we do not
know what comes next or even the total number of items. 
Moreover, an online algorithm has to pack each incoming item immediately
and it is not allowed to change its decisions later.

We focus mostly on the online setting.  To measure the effectiveness
of online algorithms for a particular instance $L$, we use the
restricted offline optimum denoted by $\vari{OPT}(L)$ or $\vari{OPT}$
when the instance $L$ is obvious from the context.  Let
$\vari{ALG}(L)$ denote the number of bins used by the algorithm
$\vari{ALG}$.  The algorithm is \textit{absolutely $r$-competitive} if
for any instance $\vari{ALG}(L)\leq r\cdot{\vari{OPT}(L)}$ and
\textit{asymptotically $r$-competitive} if for any instance
$\vari{ALG}(L)\leq r\cdot{\vari{OPT}(L)}+o(\vari{OPT}(L))$;
typically the additive term is just a constant.  We say that an
algorithm has the (absolute or asymptotic) competitive ratio $r$
if it is (absolutely or asymptotically) $r$-competitive and it is
not $r'$-competitive for $r'<r$.

For Colored Bin Packing, let $C$ be the set of all colors. For $c \in
C$, the items of color $c$ are called $c$-items and bins with the top (last)
item of color $c$ are called $c$-bins. By a non-$c$-item we mean an item of
color $c'\neq c$ and similarly a non-$c$-bin is a bin of 
color $c'\neq c$.  The \textit{level of a bin} means the cumulative size
of all items in the bin.

We denote a sequence of $nk$ items consisting of $n$ groups of $k$ items of colors $c_1, c_2, \dots c_k$ and sizes
$s_1, s_2, \dots s_k$ by $n \times \left( \nolinefrac{c_1}{s_1}, \nolinefrac{c_2}{s_2}, \dots \nolinefrac{c_k}{s_k} \right)$.

\medskip

\textbf{Lower Bounds on the Restricted Offline Optimum.}
We use two lower bounds on the number of bins in any packing. The first bound
$\vari{LB_1}$ is the sum of sizes of all items.

The second bound $\vari{LB_2}$ is the maximal color discrepancy inside
the input sequence.  In Black and White Bin Packing, the color
discrepancy introduced by Balogh et al.~\cite{balogh14} is simply the
difference of the number of black and white items in a segment of
the input sequence, maximized over all segments.  It is easy to see that
it is a lower bound on the number of bins.

In the generalization of the color discrepancy for more than two colors we
count the difference between $c$-items and non-$c$-items for all
colors $c$ and segments. It is easy to see that this is a lower bound
as well. Formally, let $s_{c,i}=1$ if the $i$-th item from the input
sequence has color $c$, and $s_{c,i}=-1$ otherwise. We define
\[
\vari{LB_2} = \max_{c \in C} \max_{i, j} \sum_{\ell=i}^{j}
s_{c,\ell}\,.
\]
For Black and White Bin Packing, equivalently 
$\vari{LB_2} = \max_{i, j} \left| \sum_{\ell=i}^{j} s_\ell \right|$,
where $s_i=1$ if the $i$-th item is white, and $s_i=-1$ otherwise; the
absolute value replaces the maximization over colors.

We prove that $\vari{LB_2}$ is a lower bound on the optimum similarly to the proof of Lemma~5 in~\cite{balogh14}.
First we observe that the number of bins in the optimum cannot increase by removing a prefix or a suffix
from the sequence of items.

\begin{observation}
Let $L = L_1 L_2 L_3$ be a sequence of items partitioned into three subsequences (some of them can be empty).
Then $\vari{OPT}(L) \geq \vari{OPT}(L_2).$ 
\end{observation}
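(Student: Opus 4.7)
The plan is to take an optimal restricted offline packing of $L$ and show that deleting the items of $L_1$ and $L_3$ from it yields a valid restricted offline packing of $L_2$. Since the number of (non-empty) bins can only decrease under such a deletion, this gives $\vari{OPT}(L_2)\leq \vari{OPT}(L)$.

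First I would recall that in any restricted offline packing, the items inside a single bin are stacked in the order in which they arrive in the input sequence. Since $L_1$ is a prefix of $L$ and $L_3$ is a suffix, every $L_1$-item in a given bin lies strictly below every $L_2$-item in that bin, and every $L_3$-item lies strictly above every $L_2$-item. Hence, within each bin, the $L_1$-, $L_2$-, and $L_3$-items form three contiguous blocks from bottom to top.

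Next I would verify that deleting the $L_1$- and $L_3$-items from each bin keeps the packing feasible for $L_2$. Because the deleted items form a contiguous block at the bottom and another at the top of each bin, no two previously non-adjacent $L_2$-items become adjacent, so the color-alternation constraint continues to hold on every bin. The bin level can only decrease, so the unit-capacity constraint is preserved. Finally, the items of $L_2$ still appear in each bin in their original arrival order, so this is a valid restricted offline packing of $L_2$.

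There is no real obstacle here; the only thing to be careful about is the claim that $L_1$-, $L_2$-, $L_3$-items appear as contiguous blocks within every bin, which is immediate from the fact that bins are stacked in arrival order and that $L_1$, $L_2$, $L_3$ are ordered subsequences of $L$. Since the induced packing uses at most as many non-empty bins as the original optimal packing of $L$, we conclude $\vari{OPT}(L_2)\leq \vari{OPT}(L)$, as claimed.
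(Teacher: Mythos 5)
Your proof is correct and rests on the same key observation as the paper's: in a restricted offline packing the prefix items sit at the bottoms of bins and the suffix items at the tops, so removing them preserves feasibility. The paper merely phrases this as an iterated removal of a single first or last item, whereas you delete all of $L_1$ and $L_3$ in one step via the contiguous-block argument; the substance is identical.
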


\begin{proof}
It is enough to show that the removal of the first or the last item does not increase the optimum. 
By iteratively removing items from the beginning and the end of the sequence we obtain the subsequence $L_2$
and consequently $\vari{OPT}(L) \geq \vari{OPT}(L_2)$.

The first item of the sequence is clearly the first item in a bin. By removing the first item from the bin we do not violate any condition.
Hence any packing of $L$ is a valid packing of $L$ without the first item.
A similar claim holds for the last item.
\end{proof}

\begin{lemma}
$\vari{OPT}(L) \geq \vari{LB_2}.$ 
\end{lemma}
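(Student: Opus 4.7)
The plan is to reduce to a maximizing segment and then show the bound bin-by-bin by restricting attention to one color versus all others.

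First, pick $c \in C$ and indices $i \le j$ that attain the maximum, so $\sum_{\ell=i}^{j} s_{c,\ell} = \vari{LB_2}$, and let $L_2$ be the subsequence of items in positions $i, i+1, \dots, j$. By the preceding Observation, $\vari{OPT}(L) \ge \vari{OPT}(L_2)$, so it suffices to show $\vari{OPT}(L_2) \ge \vari{LB_2}$. Let $a$ be the number of $c$-items in $L_2$ and $b$ the number of non-$c$-items, so that $a - b = \vari{LB_2}$.

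The key observation is that in any valid packing, for each bin, the number of $c$-items in the bin exceeds the number of non-$c$-items by at most $1$. Indeed, inside a bin with $k$ items of which $a_c$ have color $c$, the color constraint forbids two $c$-items from being adjacent, so the $a_c$ $c$-items occupy positions of a single bin with no two consecutive, which forces $a_c \le \lceil k/2 \rceil$. Equivalently, $a_c - (k - a_c) \le 1$. Note that this uses only the restriction on color $c$ and is indifferent to how the non-$c$-items are colored among themselves.

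Now sum this inequality over all bins in an optimal packing of $L_2$. Since every $c$-item of $L_2$ lies in exactly one bin and similarly for non-$c$-items, the left-hand sides sum to $a - b = \vari{LB_2}$, while the right-hand side sums to the number of bins used, i.e., $\vari{OPT}(L_2)$. Therefore $\vari{OPT}(L_2) \ge \vari{LB_2}$, and combined with the Observation this yields $\vari{OPT}(L) \ge \vari{LB_2}$.

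The only real content is the per-bin inequality $a_c - (k - a_c) \le 1$; the main subtlety to watch for is that, for $|C| \ge 3$, a bin need not alternate between $c$ and non-$c$ (two non-$c$-items of different colors may sit next to each other), so one should not claim strict $c/\text{non-}c$ alternation. Fortunately, only the non-adjacency of $c$-items is needed, which is exactly what the color constraint gives.
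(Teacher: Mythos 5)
Your proof is correct and follows essentially the same route as the paper's: restrict to a maximizing segment via the Observation, then bound the per-bin excess of $c$-items over non-$c$-items by $1$ and sum over bins. Your explicit remark that only the non-adjacency of $c$-items (not strict $c$/non-$c$ alternation) is needed makes the per-bin step slightly more careful than the paper's phrasing, but the argument is the same.
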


\begin{proof}
We prove that for all colors $c$ that the optimum is at least $\vari{LB_{2,c}} := \max_{i, j} \sum_{\ell=i}^{j} s_{c,\ell}$.
Fix a color $c$ and let $i, j$ be $\argmax_{i, j} \sum_{\ell=i}^{j} s_{c,\ell}$.
Let $\delta = \vari{LB_{2,c}}$. We may assume that $\delta > 0$, otherwise $\delta$ is trivially at most the optimum.
By the previous observation we may assume $i = 1$ and $j = n$.

Consider any packing of the sequence and let $k$ be the number of bins used.
Any bin contains at most one more $c$-item than non-$c$-items,
since colors are alternating between $c$ and other colors in the worst case.
Since we have $\delta$ more $c$-items than non-$c$-items, we get $k\geq \delta$.
Therefore $\vari{OPT} \geq \vari{LB_{2,c}}$
\end{proof}

In Black and White Bin Packing, when all the items are of size zero,
all Any Fit algorithms create a packing into the optimal number of
bins~\cite{balogh14}.  For more than two colors this is not true and
in fact no deterministic online algorithm can have a competitive ratio
below $1.5$.  However, in the restricted offline setting a packing
into $\vari{LB_2}$ bins is still always possible, even though this fact
is not obvious. This shows that the color discrepancy fully characterizes the
combinatorial aspect of the color restriction in Colored Bin Packing.

\begin{theorem}
\label{thm:LB2feasible}
Let all items have size equal to zero.
Then a packing into $\vari{LB_2}$ bins is possible in the restricted offline setting,
i.e., items can be packed into $\vari{LB_2}$ bins without reordering.
\end{theorem}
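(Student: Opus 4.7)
My plan is to construct the required packing by a combination of induction and a technical lemma on the ``current demand''
\[
D_c(t) := g_c(t) - \min_{0 \le t' \le t} g_c(t'),
\]
which satisfies $0 \le D_c(t) \le \vari{LB_2}(L)$ and $\max_{c,t} D_c(t) = \vari{LB_2}(L)$.

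The first step is the key lemma $\sum_{c \in C} D_c(t) \le \vari{LB_2}(L)$ for every $t$, proved by induction on $t$. When the item at time $t$ has color $c^*$, $D_{c^*}$ goes to $D_{c^*}+1$ and each $D_{c'}$ for $c' \neq c^*$ goes to $\max(D_{c'}-1, 0)$, so the sum changes by $2-|C|+z$, where $z$ is the number of $c' \neq c^*$ with $D_{c'}(t-1) = 0$. For $z \le |C|-2$ the sum is non-increasing; for $z = |C|-1$ the new sum equals $D_{c^*}(t) \le \vari{LB_2}(L)$, preserving the bound.

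The construction itself must be offline with genuine lookahead, since for three or more colors there is no deterministic online algorithm achieving $\vari{LB_2}$ bins---the Dósa--Epstein~\cite{DosEps14} lower bound of $1.5$ rules out natural online rules such as Any Fit, Best Fit, Worst Fit, or the age-based ``Oldest Fit''; a short adversarial sequence such as $1,2,1,1,2,3,2,2$ (with $\vari{LB_2}=2$) already defeats these online heuristics but can be packed offline into the two bins $(1,2,1,2,3,2)$ and $(1,2)$. I would propose the algorithm: scan items left to right, and when a $c$-item arrives, place it into a non-$c$-bin chosen by a ``closest-in-future'' rule---take the non-$c$-bin whose current top color's next occurrence in the remaining sequence is closest (so that bins whose tops are still needed far in the future are preserved), opening a new bin only when no non-$c$-bin exists.

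The main obstacle is proving this rule never opens more than $\vari{LB_2}(L)$ bins. The proof I envision proceeds by maintaining, at every time $t$, an invariant of the form ``the multiset of current top colors $(n_c(t))_c$ is majorized by the profile of forthcoming demands $(D_c(t'))_{c,t' \ge t}$'', so that the number of non-empty bins is controlled by $\max_{t'} \sum_c D_c(t') \le \vari{LB_2}(L)$ from the first lemma. Formalizing the correct majorization relation, handling ties in the selection rule, and verifying each step preserves the invariant through a careful exchange argument that swaps any ``misaligned'' top color with a valid alternative, constitutes the principal technical challenge of the proof.
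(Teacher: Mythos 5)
There is a genuine gap: the heart of the proof is missing. Your auxiliary lemma $\sum_{c}D_c(t)\le \vari{LB_2}(L)$ is correct (the case analysis on the update of the $D_c$'s checks out, since $D_{c^*}$ increases by one and each other $D_{c'}$ decreases by one unless it is already zero), but it only controls a quantity that fluctuates over time, whereas the number of bins ever opened is monotone nondecreasing. The whole difficulty of the theorem is precisely to show that some packing rule keeps the tops of the open bins ``diversified'' in a way that matches \emph{future} demands, so that a new bin is opened only when the discrepancy genuinely forces it. You name this step (``the multiset of top colors is majorized by the profile of forthcoming demands \dots constitutes the principal technical challenge'') but you neither state the invariant precisely nor prove that your ``closest-in-future'' selection rule preserves it. Greedy rules driven by next-occurrence heuristics are exactly the kind of thing that needs a careful exchange argument or a counterexample, and as written you have supplied neither; so the proposal is a plan plus one true lemma, not a proof.

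For comparison, the paper avoids any explicit packing rule and instead argues by minimal counterexample: it defines \emph{important intervals} (maximal intervals realizing discrepancy $d=\vari{LB_2}$), shows that in a minimal counterexample each such interval consists of exactly $d$ items of a single color and that distinct important intervals are disjoint, and then decomposes the sequence as $J_0,I_1,J_1,\dots,I_k,J_k$. For $k>2$ (and in most subcases of $k\le 2$) it packs two overlapping pieces into $d$ bins by minimality and merges them along the monochromatic interval $I_2$ (every bin of one packing ends, and every bin of the other begins, with an item of $I_2$); the remaining cases are handled by deleting a small number of items to drop the discrepancy to $d-1$ and placing the deleted items in one extra bin. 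If you want to salvage your approach, you would need to either carry out the majorization invariant in full (including the tie-breaking and the exchange step) or switch to a global argument of the paper's type; at present the statement is not proved.
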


\begin{proof}
Consider a counterexample with a minimal number of items in the sequence.
Let $d = \vari{LB_2}$ be the maximal discrepancy in the counterexample
and $n \geq d$ be the number of items.
The minimality implies that the theorem holds for all sequences of length $n' < n$.
Moreover, $d > 1$, since for $d = 1$ we can pack the sequence trivially
into a single bin.

We define an \textit{important interval} as a maximal interval of discrepancy $d$,
more formally a subsequence from the $i$-th item to the $j$-th such that
for some color $c$ the discrepancy on the interval is $d$, i.e., $\sum_{\ell=i}^{j} s_{c,\ell} = d$,
and we cannot extend the interval in either direction without decreasing its discrepancy.
For an important interval, its \textit{dominant color} $c$ is the most frequent color inside
it. At first we show that important intervals are just $d$ items of the same color.

\begin{observation}
Each important interval $I$ contains only $d$ items of its dominant color $c$ in the minimal counterexample.
\end{observation}
\begin{proof}
Suppose there is a non-$c$-item in $I$ and let $a$ be the last such item in $I$.
Then $a$ must be followed by a $c$-item $b$ in $I$, otherwise $I$ without $a$ would have higher discrepancy.
We delete $a$ and $b$ from the sequence and pack the rest into $d$ bins by minimality.

Consider the situation after packing the item prior to $a$.
There must be a $c$-bin $B$, otherwise the subsequence of $I$ from the beginning up to $a$ (including $a$)
has strictly more non-$c$-items than $c$-items
(each $c$-item from $I$ is under a non-$c$-item
and $a$ is the extra non-$c$-item).
Hence the rest of $I$ after $a$ must have discrepancy of more than $d$.
By putting $a$ and $b$ into $B$ we pack the whole sequence into $d$ bins,
thus it is not a counterexample.
\end{proof}

We show that important intervals are disjoint in the minimal counterexample.
Suppose that two important intervals $I_1$ and $I_2$ with dominant colors $c_1$ and $c_2$
intersect on an interval $J$. If $c_1 \neq c_2$ we use the previous observation,
since $I_1$ or $I_2$ has to contain an item from the other interval.
Otherwise $c_1 = c_2$, but then their union has discrepancy of more than $d$
which cannot happen.

Clearly, there must be an important interval in any non-empty sequence.
Let $I_1, I_2, \dots I_k$ be important intervals in the counterexample sequence and
let $J_1, J_2, \dots J_{k - 1}$ be the intervals between the important intervals ($J_i$ between $I_i$ and $I_{i + 1}$),
$J_0$ be the interval before $I_1$ and $J_k$ be the interval after $I_k$. 
These intervals are disjoint and form a complete partition of the sequence, i.e.,
$J_0, I_1, J_1, I_2, J_2, \dots$ $J_{k - 1}, I_k, J_k$ is the whole sequence of items.
Note that some of $J_\ell$'s can be empty.

If $k > 2$, we can create a packing $P_1$ of the sequence containing
only intervals $J_0, I_1, J_1, I_2$ into $d$ bins
by minimality of the counterexample.
Also there exists a packing $P_2$ of intervals $I_2, J_2, I_3, \dots I_k, J_k$ into $d$ bins.
Any bin from $P_1$ must end with an item from the important interval $I_2$ and any bin from $P_2$ must start
with an item from $I_2$. Therefore we can merge both packings by items
from $I_2$ and obtain a valid packing of the whole sequence into $d$
bins. Hence $k \leq 2$.

In the case $k = 1$, there are four subcases depending on whether $J_0$ and $J_1$ are empty or not:
\begin{compactitem}
\item $J_0$ and $J_1$ are non-empty: We create packings of $J_0, I_1$ and $I_1, J_1$ into $d$ bins
  and merge them as before.
\item $J_0$ is empty and $J_1$ non-empty: We delete the first item from $I_1$, pack the rest into $d - 1$ bins
  (the maximal discrepancy decreases after deleting) and put the deleted item into a separate bin.
\item $J_0$ is non-empty and $J_1$ empty: Similarly, we delete the last item from $I_1$ and pack the rest into $d - 1$ bins.
\item both are empty: $I_1$ can be trivially packed into $d$ bins.
\end{compactitem}

For $k = 2$, we first show that $J_0$ and $J_2$ are empty and $J_1$ is non-empty in the counterexample.
If $J_0$ is non-empty, we merge packings of $J_0, I_1$ and $I_1, J_1, I_2, J_2$,
and if $J_2$ is non-empty, we put together packings of $J_0, I_1, J_1, I_2$ and $I_2, J_2$.
When $J_1$ is empty, the sequence consists only of intervals $I_1$ and $I_2$ which must have different
dominant colors. Thus they can be easily packed one on the other into $d$ bins.

The last case to be settled has only $I_1, J_1$ and $I_2$ non-empty. If the dominant colors $c_1$ for $I_1$
and $c_2$ for $I_2$ are different, we delete the first item from $I_1$
and the last item from $I_2$, so the discrepancy decreases.
We pack the rest into $d-1$ bins and put the deleted items into a
separate bin, so the whole sequence is in $d$ bins again.

Otherwise $c_1$ is equal to $c_2$ and let $c$ be $c_1$. Since the important intervals are maximal, there must
be at least $d + 1$ more non-$c$-items than $c$-items in $J_1$.
Also any prefix of $J_1$ contains strictly more non-$c$-items than $c$-items,
thus at least the first two items in $J_1$ have colors different from $c$.

We delete the first $c$-item $p$ from $I_1$, the first non-$c$-item $q$ from $J_1$
and the last $c$-item $r$ from $I_2$.
Suppose for a contradiction that there is an interval $I$ of discrepancy
$d$ in the rest of the sequence. As $I$ has lower discrepancy in the original sequence (we deleted an item from each important interval of the original sequence),
it must contain $q$ and thus intersect $I_1$ and $J_1$, hence its dominant color is $c$.
If $I$ intersects also $I_2$, we add the items $p, q$ and $r$ into $I$
(and possibly some other items from $I_1$ or $I_2$)
to obtain an interval of discrepancy at least $d+1$ in the original sequence which is a contradiction.
Otherwise $I$ intersects only $I_1$ and $J_1$, but any prefix of the rest
of $J_1$ still contains at least as many non-$c$-items as $c$-items,
so $I \setminus J_1$ has discrepancy at least $d$. But $I \setminus J_1$ is contained in 
the rest of $I_1$ that has only $d-1$ items and we get a contradiction.
Therefore the maximal discrepancy decreases after deleting the three items,
so we can pack the rest into $d - 1$ bins and the items $p, q$ and $r$ are put into a separate bin.
Note that important intervals of discrepancy $d - 1$ may change after deleting the three items.

In all cases we can pack the sequence into $d$ bins, therefore no such counterexample exists.
\end{proof}

\section{Algorithms for Zero-size Items}\label{sec:zeroSize}

\subsection{Lower Bound on Competitiveness of Any Online Algorithm}\label{sec:zeroSizeLB}

\begin{theorem}
\label{thm:LBforColors}
For zero-size items of at least three colors,
there is no deterministic online algorithm with an asymptotic competitive ratio less than $1.5$.
Precisely, for each $n>1$ we can force any deterministic online algorithm to use at least $\lceil 1.5n \rceil$ bins using three colors,
while the optimal number of bins is $n$.
\end{theorem}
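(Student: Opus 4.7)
The plan is to construct an adaptive adversary which, for each $n>1$, forces any deterministic online algorithm to open at least $\lceil 1.5n\rceil$ bins on a sequence $\sigma$ with $LB_2(\sigma)=n$, so that Theorem~\ref{thm:LB2feasible} yields $\vari{OPT}(\sigma)=n$. Because the algorithm is deterministic, the adversary may simulate it and hence choose the next item as a function of the current configuration (the multiset of bin top colors).

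The adversary begins with a setup stage of $n$ items of one color, say $A$. Since consecutive $A$-items must occupy different bins, the algorithm opens exactly $n$ bins, all with top $A$. After this stage, $LB_2(\sigma)=n$ is already attained by the segment $A^n$, and any further items must be chosen so that the discrepancy of each color in every contiguous window stays within $n$.

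In the forcing stage, the adversary runs $\lceil n/2\rceil$ adaptive rounds, each forcing one additional bin. The key observation is that a deterministic algorithm is compelled to open a new bin only when every currently open bin has top color equal to the color of the incoming item. Each round therefore presents a short, carefully crafted sequence of items of colors $B$ and $C$ that, depending on where the algorithm places them, drives the state into a configuration in which a single color dominates all bin tops; one more item of that color then forces a new bin to open. Between rounds the adversary plays items of color $A$ (interleaved with $B$'s or $C$'s) to refresh the configuration so that the next round can fire. A small prototype for $n=2$ already illustrates the idea: after $AA$, the block $BC$ lands the algorithm in either state $(0,1,1)$ or $(1,0,1)$, and the adversary finishes with $BB$ in the first case or $CC$ in the second, forcing a third bin while keeping every color's discrepancy at most $2$.

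The main obstacle is the detailed case analysis needed to show that each round succeeds against every possible algorithmic response, together with the bookkeeping required to keep $LB_2\leq n$ globally. Specifically, one must verify that the items used inside a round, plus the $A$-items inserted between rounds, never make any color's count minus the complementary count exceed $n$ in any contiguous window---and in particular that the $A$-segment after the setup is not locally too dense. The argument is that within each round the number of $B$'s and $C$'s is small and balanced, and the $A$-items between rounds are sparsely interleaved with non-$A$'s, so the contribution to every color's discrepancy is $O(1)$ per round and never reaches $n+1$. After all $\lceil n/2\rceil$ rounds, the algorithm has opened $n+\lceil n/2\rceil=\lceil 1.5n\rceil$ bins, while Theorem~\ref{thm:LB2feasible} gives a restricted offline packing of $\sigma$ into $n$ bins, hence $\vari{OPT}(\sigma)=n$, establishing the claim.
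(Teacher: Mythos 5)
There is a genuine gap, and it is in the core mechanism rather than in the ``detailed case analysis'' you defer. Your rounds are built on the premise that a short, $O(1)$-size block of $B$'s and $C$'s can ``drive the state into a configuration in which a single color dominates \emph{all} bin tops,'' after which one more item of that color forces a new bin. But a deterministic algorithm is only forced to open a bin when \emph{every} open bin is topped with the incoming color, and with $m\geq n$ bins open the algorithm can resist unification by \emph{spreading}: when your block of $B$'s and $C$'s arrives, it places them on distinct $A$-bins, so that afterwards there are several bins of each of two colors and your follow-up items always have a legal destination. To then unify all $m$ tops to one color $c$ you must send at least as many $c$-items as there are non-$c$-topped bins, which is $\Theta(n)$, not $O(1)$, and this blows the window-discrepancy budget for $c$ (this is precisely the tension that the paper's algorithm BAF exploits to achieve the matching upper bound, by keeping $N_c-\lceil D/2\rceil\leq \vari{CD}_c$). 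The paper's construction avoids this by \emph{not} trying to open one bin per round: its potential function is the number of \emph{black-topped} bins, and each phase presents the dichotomy ``if the algorithm spreads the $n$ alternating white/red items onto distinct old black bins, then $\lceil n/2\rceil$ bins are white-topped and a final volley of $n$ white items wins outright; if it does not spread, then $n$ black items increase the black-top count by at least one (while resetting $\vari{CD}_{\mathrm{w}}=\vari{CD}_{\mathrm{r}}=0$).'' The win condition is ``$\lceil n/2\rceil$ bins of one color plus a full discrepancy budget of $n$ for that color,'' not ``all bins of one color,'' and that difference is exactly what your proposal is missing.

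Two further concrete problems. First, your $n=2$ prototype is wrong in one branch: from state $(1,0,1)$ (tops $A$ and $C$ after $AABC$) finishing with $CC$ produces the sequence $AABCCC$, whose last three items give color discrepancy $3$ for $C$, so $\vari{OPT}=3$ and the ratio is $1$; the correct finish there is $AA$ (send $A$ onto the $C$-top, then a forced $A$), which keeps $\vari{LB_2}=2$. Second, even if the round structure worked, the exact bound $\lceil 1.5n\rceil$ for odd $n$ needs a separate argument (the naive count gives only $\lceil 1.5n\rceil-1$ because $\vari{CD}$ of the finishing color is already $1$); the paper handles this with an extra four-item gadget, and your proposal does not address parity at all.
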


\begin{proof}
We show that if an algorithm uses less than $\lceil 1.5n \rceil$ bins,
the adversary can send some items and force the algorithm to increase the
number of black bins or to use at least $\lceil 1.5n \rceil$ bins,
while the maximal discrepancy stays $n$.
Applying Theorem~\ref{thm:LB2feasible} we know that $\vari{OPT} = n$,
but the algorithm is forced to open $\lceil 1.5n \rceil$ bins using
finitely many items as the number of black bins is increasing.
Moreover, the adversary uses only three colors throughout the whole proof, denoted by black, white and red
and abbreviated by b, w and r in formulas.

We introduce the current discrepancy of a color $c$ which basically tells us how many $c$-items
have come recently and thus how many $c$-items may arrive without increasing the overall discrepancy.
Formally, we define the current discrepancy after packing the $k$-th item as
$\vari{CD}_{c,k} = \max_{i \leq k+1} \sum_{\ell=i}^k s_{c,\ell}$,
i.e., the discrepancy on an interval which ends with the last packed item (the $k$-th).
Note that $\vari{CD}_{c,k}$ is at least zero as we can set $i = k+1$.
We omit the $k$ index in $\vari{CD}_{c,k}$ when it is obvious from the context.

Initially the adversary sends $n$ black items, then he continues by phases and
ends the process whenever the algorithm uses $\lceil 1.5n \rceil$ bins at the end of a phase.
When a phase starts, there are less than $\lceil 1.5n \rceil$
black bins and possibly some other white or red bins.
We also guarantee $\vari{CD}_{\mathrm{w}} = 0$,
$\vari{CD}_{\mathrm{r}} = 0$, and $\vari{CD}_{\mathrm{b}} \leq n$. 
Let $N_{\mathrm{b}}$ be the number of black bins when a phase starts.
In each phase the adversary forces the algorithm to use $\lceil 1.5n \rceil$ bins
or to have more than $N_{\mathrm{b}}$ black bins,
while $\vari{CD}_{\mathrm{w}} = 0$, $\vari{CD}_{\mathrm{r}} = 0$, and
$\vari{CD}_{\mathrm{b}} \leq n$ at the end of each phase in which $N_{\mathrm{b}}$ increases.

We now present how a phase works. Let new items be items from the current phase and
old items be items from previous phases.
The adversary begins the phase by sending $n$ new items of colors alternating between
white and red, starting by white, so he sends $\lceil n/2\rceil$
white items and $\lfloor n/2\rfloor$ red items.
After these new items, the current discrepancy is one either for red if $n$ is even,
or for white if $n$ is odd, and it is zero for the other colors.

If some new item is not put on an old black item, 
the adversary sends $n$ black items. Since the new items are packed into less
than $n$ black bins (more precisely, black at the beginning of the phase),
the number of black bins increases. Moreover, $\vari{CD}_{\mathrm{w}} = 0$,
$\vari{CD}_{\mathrm{r}} = 0$, and $\vari{CD}_{\mathrm{b}} = n$,
hence the adversary finishes the phase and continues with the next phase if there are
less than $\lceil 1.5n \rceil$ black bins.

Otherwise all new red and white items are put on old black items. If
$n$ is even, $\vari{CD}_{\mathrm{w}} = 0$ and the adversary sends additional $n$
white items. After that there are at least $1.5n$ white bins, so the adversary
reaches his goal.

If $n$ is odd, $\vari{CD}_{\mathrm{w}} = 1$ and the adversary
can send only $n-1$ white items forcing $\lceil 1.5n\rceil - 1$ white bins.
This suffices to prove the result in the asymptotic sense,
but for the precise lower bound of $\lceil 1.5n\rceil$
for an odd $n$ we need a somewhat more complicated construction.

Therefore if all new red and white items are put on old black items and
$n$ is odd, the adversary sends a black item $e$.
If $e$ does not go on a new white item, he sends $n$ white items
forcing $\lceil n/2\rceil + n$ white bins and it is done.
Otherwise the black item $e$ is put on a new white item.
White and red have $\lfloor n/2\rfloor$ new items on the top of bins,
$\vari{CD}_{\mathrm{w}} = 0$, and $\vari{CD}_{\mathrm{r}} = 0$. 
The adversary sends another black item $f$.
Since red and white are equivalent colors (considering only new items),
w.l.o.g.\ $f$ goes into a red bin or into newly opened bin.

Next he sends a white item $g$ and a red item $h$.
After packing $g$ there are $\lceil n/2\rceil$ bins with a new
white item on the top and at least one bin with a new black item on the top.
Moreover, after packing the red item $h$ we have
$\vari{CD}_{\mathrm{b}} = 0$ and $\vari{CD}_{\mathrm{w}} = 0$.
So if $h$ is not put on a new white item (i.e., it is put into a black
bin, a new bin or on an old white item), the adversary sends $n$ white items
and the algorithm must use $\lceil 1.5n\rceil$ bins.
Otherwise $h$ is packed on a new white item and the adversary sends $n$ black items.
The number of black bins increases, because the adversary sent $n+2$ new black
items and at most $n+1$ new non-black items were put into a black bin
(at most $n$ items at the beginning of the phase plus the item $g$).
Since $\vari{CD}_{\mathrm{w}} = 0$, $\vari{CD}_{\mathrm{r}} = 0$, and
$\vari{CD}_{\mathrm{b}} = n$,
the adversary continues with the next phase.
\end{proof}

The lower bound has additional properties that we use later in our lower
bound for items of arbitrary size. Most importantly, we have at least
$\lceil 1.5\cdot \vari{OPT} \rceil$ of $c$-bins at the end (and
possibly some additional bins of other colors).

\begin{lemma} \label{l:LBforColorsInstance}
After packing the instance from Theorem~\ref{thm:LBforColors} by an online algorithm
there is a color $c$
for which we have $\lceil 1.5\cdot \vari{OPT} \rceil$ of $c$-bins and $\vari{CD}_{c} = \vari{OPT}$,
while $\vari{CD}_{c'} = 0$ for all other colors $c' \neq c$.
Moreover, in each restricted offline optimal packing of the instance all the bins have a $c$-item on the top.
\end{lemma}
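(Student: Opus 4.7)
The plan is to trace through the branches of the adversarial construction in the proof of Theorem~\ref{thm:LBforColors} and, for each branch that ends the instance, identify the forcing color $c$ and verify the required properties. The instance-ending branches come in two flavors: either the instance terminates right after the adversary has issued a batch of $n$ consecutive black items (the Case~A branch and the odd-$n$ subbranch where the red item $h$ lands on a new white bin), in which case we take $c$ to be black, or it terminates right after a batch of $n$ consecutive white items has been sent (the remaining subbranches, covering both parities of $n$), in which case we take $c$ to be white.

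The lower bound $\vari{CD}_c \geq n$ is immediate since the last $n$ items of the sequence are monochromatic of color $c$, and the matching upper bound $\vari{CD}_c \leq n$ follows from Theorem~\ref{thm:LB2feasible} together with the fact that Theorem~\ref{thm:LBforColors} delivers a construction with $\vari{OPT}=n$; hence $\vari{CD}_c = n$. To obtain $\vari{CD}_{c'} = 0$ for every other color $c'$ we inspect each instance-ending branch and check that no suffix has positive $c'$-discrepancy: a suffix lying inside the $n$-item monochromatic tail contains no $c'$-items, and extending it back over the preceding $n$-item alternating white/red block (and, in the odd-$n$ branch, over the handful of items $e$, $f$, $g$, $h$) never produces a positive value because the trailing $n$ non-$c'$ items dominate; extending a suffix further back into a previous phase only adds a non-positive contribution thanks to the start-of-phase invariant $\vari{CD}_{c'} = 0$, so the maximum is attained by the empty suffix. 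The bound of $\lceil 1.5n\rceil$ on the number of $c$-bins in the algorithm's packing is read off from the same tallies already done inside the proof of Theorem~\ref{thm:LBforColors}: in the white branches the final $n$ whites open $n$ fresh white-topped bins which, together with the $\lceil n/2\rceil$ white-topped bins produced by the preceding alternating block, account for $\lceil 1.5n\rceil$ white bins, while in the black branches the black bins accumulated over the phases up to and including the terminating one already reach $\lceil 1.5n\rceil$ (this is exactly the condition that made the adversary stop).

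For the moreover part, fix any restricted offline optimum packing $P$ into $\vari{OPT} = n$ bins and let $S$ be a suffix of the input realizing $\vari{CD}_c = n$. For each bin $b$ of $P$, let $L_b$ be the list of items that $P$ places in $b$ during $S$; the alternation constraint inside $b$ forces the $c$-items and non-$c$-items of $L_b$ to alternate, so $L_b$ contributes at most $1$ to the $c$-discrepancy of $S$, with equality exactly when $L_b$ is non-empty and its last item is a $c$-item. Since the $n$ bin-contributions sum to $n$, each must equal $1$; in particular the last item of $L_b$ is a $c$-item, and because $S$ is a suffix of the whole input this is the top of bin $b$. The main technical burden of the proof will be the enumeration for $\vari{CD}_{c'} = 0$ in the odd-$n$ branch that uses the extra items $e$, $f$, $g$, $h$; everything else reduces to routine counting once the correct color $c$ has been identified.
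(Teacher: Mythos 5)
Your proof of the main part of the lemma is essentially the paper's: both trace the terminating branches of the adversary from Theorem~\ref{thm:LBforColors}, take $c$ to be black when the instance stops because the black-bin count reaches $\lceil 1.5n\rceil$ and white (or red) when it stops after a final monochromatic batch of $n$ items, and read off $\vari{CD}_c=n$ from that batch and the bin counts from the tallies already in that proof. For $\vari{CD}_{c'}=0$ the paper is slightly slicker than your suffix enumeration: since $\vari{CD}_{c'}\leq D=n$ always holds and each of the final $n$ items of color $c\neq c'$ decreases $\vari{CD}_{c'}$ by one (or leaves it at zero), the $n$-item tail alone forces $\vari{CD}_{c'}=0$, with the black branches handled directly by the phase invariant; this spares you the case analysis over $e,f,g,h$ that you flag as the main remaining burden. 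For the ``moreover'' part you genuinely diverge: the paper simply observes that the last $n$ items are consecutive and all of color $c$, so in any packing into $\vari{OPT}=n$ bins they must occupy $n$ distinct bins and hence top every bin; your per-bin discrepancy-contribution argument also works and is more general (it needs only $\vari{CD}_c=\vari{OPT}$, not literal monochromaticity of the tail), but note two small imprecisions in it: the items of $L_b$ do not literally alternate between $c$ and non-$c$ (two adjacent non-$c$-items of distinct colors are allowed) --- you only need that consecutive $c$-items in a bin are separated by a non-$c$-item, which still gives the contribution bound of $1$; and ``equality exactly when the last item is a $c$-item'' is false as a biconditional (a bin ending in a $c$-item can contribute $0$ or less), though the implication you actually use, that contribution $1$ forces the last item to be a $c$-item, is correct.
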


\begin{proof}
Let $n = \vari{OPT}$ as in the previous proof. 
The adversary stops sending items when he finishes the last phase. In the last phase either
the number of black bins increases to $\lceil 1.5n\rceil$, or the adversary forces $\lceil 1.5n\rceil$
white or red bins by sending $n$ white or red items. In the former case the requirements
of the lemma are satisfied, because the proof guarantees $\vari{CD}_{\mathrm{w}} = 0$ and $\vari{CD}_{\mathrm{r}} = 0$
at the end of each phase in which the number of black bins increases. Moreover $\vari{CD}_{\mathrm{b}} = n$,
since $n$ black items are sent just before the end of such phase.
In the latter case, the last $n$ white items cause $\vari{CD}_{\mathrm{b}} = 0$, $\vari{CD}_{\mathrm{r}} = 0$, and
$\vari{CD}_{\mathrm{w}} = n$; the case of $n$ red items is symmetric.

Since an optimal packing uses $n$ bins and the last $n$ items are of the same color (in each case of the construction), 
they must go into different bins. Hence each bin of a restricted offline optimal packing has a $c$-item on the top.
\end{proof}

\subsection{Optimal Algorithm for Zero-size Items}\label{sec:zeroSizeUB}

The overall problem of FF, BF and WF is that they pack items regardless of the colors of bins.
We address the problem by balancing the colors of top items in bins
-- we mostly put an incoming $c$-item into a bin of
the most frequent other color.
When we have more choices of bins where to put an item we use First Fit.
We call this algorithm \textit{Balancing Any Fit} (BAF).

We define BAF for items of size zero
and show that it opens at most $\lceil 1.5 \vari{LB_2}\rceil$ bins
which is optimal in the worst case by Theorem~\ref{thm:LBforColors}.
Then we combine BAF with the algorithm Pseudo by Balogh et al.~\cite{balogh14} for items of arbitrary size
and prove that the resulting algorithm is absolutely $3.5$-competitive.

After packing the $k$-th item from the sequence, let $D_k$ be the maximal discrepancy so far, i.e., the discrepancy on
an interval before the $(k+1)$-st item, and let $N_{c,k}$ be the number of $c$-bins after packing the $k$-th item.
As in the proof of Theorem~\ref{thm:LBforColors}, we define the current discrepancy as
$\vari{CD}_{c,k} = \max_{i \leq k+1} \sum_{\ell=i}^k s_{c,\ell}$,
i.e., the discrepancy on an interval which ends with the last packed item (the $k$-th).
Note that $\vari{CD}_{c,k} \leq D_k$ and that $\vari{CD}_{c,k}$ is at least zero as we can set $i = k+1$.
The current discrepancy basically tells us how many $c$-items
have come recently and thus how many $c$-items may arrive without increasing the overall discrepancy.

Let $\alpha_{c,k} = N_{c,k} - \lceil D_k/2\rceil $ be the difference
between the number of $c$-bins and the half of the maximal discrepancy so far.
Observe that $\lceil D_k/2\rceil $ is the number of bins which BAF may use in addition to $\vari{OPT}$ bins.
We omit the index $k$ in $D_k$, $N_{c,k}$, $\vari{CD}_{c,k}$ and $\alpha_{c,k}$
when it is obvious from the context.

While processing the items,
if $D$ is the maximal discrepancy so far,
the adversary can send $D - \vari{CD}_c$ of $c$-items
without increasing the maximal discrepancy and forcing the algorithm to use $N_c + D - \vari{CD}_c$ bins.
Hence, to end with at most $\lceil 1.5D\rceil$ bins we try to keep
$N_c - \vari{CD}_c \leq \lceil D/2\rceil$
for all colors $c$. For simplicity, we use an equivalent inequality of 
$\alpha_c = N_c - \lceil D/2\rceil \leq \vari{CD}_c$.
If we can keep the inequality valid and it occurs that there is a color $c$ with
$N_c > \lceil 1.5D\rceil$, we get
$\vari{CD}_c \geq N_c - \lceil D/2\rceil > \lceil 1.5D\rceil - \lceil D/2\rceil = D$
which contradicts $\vari{CD}_c \leq D$.
Let the \textbf{main invariant} for a color $c$ be
\begin{equation} \label{eq:mainInvariant}
\alpha_c = N_c - \left\lceil\frac{D}{2}\right\rceil \leq \vari{CD}_c.
\end{equation}

As $\vari{CD}_c \geq 0$, keeping the invariant is easy for all colors
with at most $\lceil D/2\rceil$ bins. Also when there is only one
color $c$ with $N_c > \lceil D/2\rceil$, we just put all non-$c$-items
into $c$-bins. Therefore, if a non-$c$-item comes, the number of $c$-bins $N_c$ decreases and the
current discrepancy $\vari{CD}_c$ decreases by at most one. ($\vari{CD}_c$ stays the same when it is zero.)
Since both increase with an incoming $c$-item, we are keeping our
main invariant~(\ref{eq:mainInvariant}) for the color $c$. 

Moreover, there are at most two colors with strictly more than $\lceil
D/2\rceil$ bins, given that we have at most $\lceil 1.5D\rceil$ open
bins. Thus we only have to deal with two colors having $N_c > \lceil
D/2\rceil$. We state the algorithm Balancing Any Fit for items of
size zero.

\begin{center}
\fbox{\parbox{0.9\textwidth}{
{\bf Balancing Any Fit (BAF):}
\begin{compactenum}
\item For an incoming $c$-item, if there are no bins or $c$-bins only, open a new bin and put the item into it.
\item Otherwise, if there is at most one color with the number of bins strictly more than $\lceil D/2\rceil$, put an incoming $c$-item
into a bin of color $c' = \argmax_{c'' \neq c} N_{c''}$.
If more colors have the same maximal number of bins, choose color $c'$ arbitrarily among them, e.g., by First Fit.
Among $c'$-bins, choose again arbitrarily.
\item Suppose that there are two colors $\mathrm{b}$ and $\mathrm{w}$
  such that $N_{\mathrm{b}} > \lceil D/2\rceil$ and $N_{\mathrm{w}} >
  \lceil D/2\rceil$. 
If $c=\mathrm{w}$, put the item into a bin of color $\mathrm{b}$.
If $c=\mathrm{b}$, put the item into a bin of color $\mathrm{w}$.
Otherwise $c\not\in\{\mathrm{b},\mathrm{w}\}$; if $N_{\mathrm{b}} -
\lceil D/2\rceil < \vari{CD}_{\mathrm{b}}$, put the item into a bin
of color $\mathrm{w}$, otherwise into a bin of color $\mathrm{b}$.
\end{compactenum}
}}
\end{center}

As we discussed, keeping the main
invariant~(\ref{eq:mainInvariant}) is easy in the first and the second
case of the algorithm. Therefore we can conclude the following claim.

\begin{claim} \label{clm:OptAlgMainInvar}
Suppose that the main invariant holds for all colors before packing the $t$-th item
and that there is at most one color $c$ with $N_{c,t-1} > \lceil D_{t-1}/2\rceil$ before the $t$-th item,
i.e., the $t$-th item is packed using the first or the second
case of the algorithm. Then the main invariant holds for all colors also after packing the $t$-th item.
\end{claim}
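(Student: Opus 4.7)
The proof follows from a direct case analysis of how the invariant quantities update when the $t$-th item, say of color $c$, is packed. My plan is to first record the possible per-step changes, then verify~(\ref{eq:mainInvariant}) color by color in each case, identifying which colors need any real work.

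I would start by listing the side effects of packing a $c$-item: on the right-hand side, $\vari{CD}_c$ grows by $1$, each $\vari{CD}_{c''}$ with $c''\neq c$ decreases by $1$ if positive and stays $0$ otherwise, and $D$ grows by $0$ or $1$, so $\lceil D/2\rceil$ grows by $0$ or $1$. On the left-hand side, $N_c$ grows by $1$; in case~1 no other $N$ changes, while in case~2 the rule $c' = \argmax_{c''\neq c} N_{c''}$ picks a color $c'$ with $N_{c'}\geq 1$ and decreases $N_{c'}$ by $1$. Consequently $\alpha_c = N_c - \lceil D/2\rceil$ moves by $+1$ or $0$ for the incoming color, by $-1$ or $-2$ for the bin-color $c'$ in case~2, and by $0$ or $-1$ for every other color.

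Comparing these changes to the right-hand side yields the invariant almost for free on most colors. For the incoming color $c$, $\alpha_c$ grows by at most $1$ and $\vari{CD}_c$ grows by exactly $1$, so the gap does not increase. For a color $c''\neq c$ whose $N_{c''}$ is untouched and already satisfies $N_{c''}\leq \lceil D^{\text{old}}/2\rceil$, we have $\alpha_{c''}^{\text{new}}\leq 0 \leq \vari{CD}_{c''}^{\text{new}}$ trivially. In case~1 every non-$c$ color is either absent or falls under this trivial bound, and in the subcase of case~2 where no color has $N_{c''}>\lceil D/2\rceil$ the touched color $c'$ still satisfies $N_{c'}^{\text{new}}\leq \lceil D^{\text{old}}/2\rceil - 1 \leq \lceil D^{\text{new}}/2\rceil-1$, so again $\alpha_{c'}^{\text{new}}\leq 0$.

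The only subcase that requires real work is when exactly one color $b$ has $N_b^{\text{old}} > \lceil D^{\text{old}}/2\rceil$. If $c=b$, then the argmax over light colors returns a light $c'$ and the previous trivial bound applies. If $c\neq b$, strict maximality of $N_b$ among $\{c'' \neq c\}$ forces $c'=b$; then $\alpha_b$ drops by $1+\Delta\lceil D/2\rceil\geq 1$ while $\vari{CD}_b$ drops by at most $1$, so $\alpha_b - \vari{CD}_b$ does not increase and the assumed invariant $\alpha_b^{\text{old}}\leq \vari{CD}_b^{\text{old}}$ carries over. I expect this last step---propagating the invariant for the heavy color $b$ when its bin is the one whose top item changes---to be the only place where the hypothesis of the claim is genuinely used, and hence the main (if mild) obstacle.
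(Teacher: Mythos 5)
Your proposal is correct and follows essentially the same route as the paper, which justifies this claim only by the informal discussion preceding the algorithm box: the invariant is trivial for colors with $N_{c''}\leq\lceil D/2\rceil$, the incoming color's gap $\alpha_c-\vari{CD}_c$ cannot grow, and the single heavy color (if any and not the incoming one) is the forced $\argmax$, so its $\alpha$ drops by at least as much as its $\vari{CD}$. Your write-up is just a more explicit, complete version of that case analysis, including the minor points (possible increase of $D$, the incoming color becoming heavy) that the paper glosses over.
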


Most of the proof of $1.5$-competitiveness of BAF thus deals with two colors having
more than $\lceil D/2\rceil$ bins. W.l.o.g.\ let
these two colors be black and white in the following and let us abbreviate them by
${\mathrm{b}}$ and ${\mathrm{w}}$.

In the third case of the algorithm we have to choose either black or
white bin for items of other colors than black and white, but the
current discrepancy decreases for both black and white, while the
number of bins stays the same for the color which we do not choose.
So if $\alpha_{\mathrm{b}} = \vari{CD}_{\mathrm{b}}$ and
$\alpha_{\mathrm{w}} = \vari{CD}_{\mathrm{w}}$, the adversary can
force the algorithm to open more than $\lceil 1.5D\rceil$ bins.

Therefore we need to prove that in the third case, i.e., when
$N_{\mathrm{b}} > \lceil D/2\rceil$ and $N_{\mathrm{w}} > \lceil
D/2\rceil$, at least one of inequalities $\alpha_{\mathrm{b}} \leq
\vari{CD}_{\mathrm{b}}$ and $\alpha_{\mathrm{w}} \leq
\vari{CD}_{\mathrm{w}}$ is strict. This motivates the following
\textbf{secondary invariant:} 
\begin{equation} \label{eq:secondaryInvariant}
2\alpha_{\mathrm{b}} + 2\alpha_{\mathrm{w}} \leq \vari{CD}_{\mathrm{b}} + \vari{CD}_{\mathrm{w}} + 1\,.
\end{equation}
If the secondary invariant holds, it is not hard to see that in the third
case of the algorithm the choice of the bin maintains the main
invariant. The tricky part of the proof is to prove the {\em base
case} of the inductive proof of the secondary invariant.
A natural proof would show the base case
whenever ${\mathrm{b}}$ and ${\mathrm{w}}$ become the two colors with
$N_{\mathrm{b}},N_{\mathrm{w}} > \lceil D/2\rceil$. However, we are
not able to do that. Instead, 
we prove that the secondary invariant
holds already at the moment when ${\mathrm{b}}$ and ${\mathrm{w}}$
become the two strictly most frequent colors on top of the bins, i.e.,
$N_{\mathrm{b}} > N_c$ and $N_{\mathrm{w}} > N_c$ for all other colors $c$, which may
happen much earlier, when the number of their bins is significantly below $D/2$. 
After that, maintaining both invariants is relatively easy.

\begin{theorem} \label{thm:optAlg}
Balancing Any Fit is $1.5$-com\-pe\-ti\-ti\-ve for items of
size zero and an arbitrary number of colors. Precisely, it uses at most
$\lceil 1.5 \cdot \vari{OPT}\rceil$ bins.
\end{theorem}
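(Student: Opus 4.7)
The strategy is to prove by induction on the number of packed items that the main invariant~(\ref{eq:mainInvariant}) holds for every color $c$ at every step. Once this is done, the theorem follows at once: combining the main invariant with $\vari{CD}_c \leq D$ yields $N_c \leq \lceil D/2\rceil + D = \lceil 3D/2\rceil$ for all $c$; and since BAF opens a new bin only in case~1, when all existing bins already have the color of the arriving item, the total number of bins at any moment coincides with some $N_c$ and hence never exceeds $\lceil 3D/2\rceil \leq \lceil 1.5\cdot\vari{OPT}\rceil$ by Theorem~\ref{thm:LB2feasible}.

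For the inductive step, cases~1 and 2 are routine: in case~1 every non-$c$ color has $N_{c'}=0$ both before and after, while for the arriving color $c$ both $N_c$ and $\vari{CD}_c$ grow by one; case~2 is exactly Claim~\ref{clm:OptAlgMainInvar}. The real content is case~3, where two colors $\mathrm{b}$ and $\mathrm{w}$ simultaneously satisfy $N_\mathrm{b},N_\mathrm{w} > \lceil D/2\rceil$. When a third-color item arrives, placing it on a $\mathrm{b}$- or $\mathrm{w}$-bin decreases both $\vari{CD}_\mathrm{b}$ and $\vari{CD}_\mathrm{w}$ by up to one while only one of $N_\mathrm{b},N_\mathrm{w}$ goes down, so the main invariant for the untouched heavy color would be violated unless that color already has some slack. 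To guarantee this slack we carry along the secondary invariant~(\ref{eq:secondaryInvariant}): it forces at least one of $\alpha_\mathrm{b}\le\vari{CD}_\mathrm{b}$ and $\alpha_\mathrm{w}\le\vari{CD}_\mathrm{w}$ to be strict, and the tie-breaking rule of case~3 is designed to send the item into the bin of the color with the slack.

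The main obstacle is establishing the base case of the secondary invariant. As flagged in the excerpt, it cannot be established at the moment $\mathrm{b}$ and $\mathrm{w}$ both first cross the $\lceil D/2\rceil$ threshold, since by that time the imbalance may be too large to derive~(\ref{eq:secondaryInvariant}) from the main invariant alone. The plan is to introduce the secondary invariant earlier, at the first step at which $\mathrm{b}$ and $\mathrm{w}$ become the two strictly most frequent top colors (i.e., $N_\mathrm{b},N_\mathrm{w} > N_c$ for every other color $c$). Until that step BAF has been running in cases~1 and~2, which balance top-color frequencies greedily; I expect to establish~(\ref{eq:secondaryInvariant}) at the transition by a careful case analysis on which arrival (a $\mathrm{b}$-, $\mathrm{w}$-, or third-color item packed via case~1 or case~2) caused $\mathrm{b}$ and $\mathrm{w}$ to become the unique top pair. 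The delicate bookkeeping is to track enough auxiliary information through cases~1 and~2---essentially that $2\alpha_\mathrm{b}+2\alpha_\mathrm{w}$ never exceeds $\vari{CD}_\mathrm{b}+\vari{CD}_\mathrm{w}+1$ whenever the two top-most colors are $\mathrm{b}$ and $\mathrm{w}$---so that the handoff to case~3 is clean.

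Once both invariants are in place, the inductive preservation under case~3 is a direct check on the three possible arriving colors. For $c \in \{\mathrm{b},\mathrm{w}\}$ the algorithm moves the top item between the two heavy colors, producing symmetric changes that leave both invariants valid. For $c\notin\{\mathrm{b},\mathrm{w}\}$, the rule selects whichever heavy color has strict slack in the main invariant, so that after packing both main inequalities still hold, and the secondary invariant is preserved by counting the simultaneous drops in $\vari{CD}_\mathrm{b}+\vari{CD}_\mathrm{w}$ against the drop in $\alpha_\mathrm{b}+\alpha_\mathrm{w}$. Any color $c'\notin\{c,\mathrm{b},\mathrm{w}\}$ satisfies its main invariant trivially since $N_{c'}$ does not increase while $\lceil D/2\rceil$ may only grow.
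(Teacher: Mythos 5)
Your overall architecture matches the paper's: maintain the main invariant~(\ref{eq:mainInvariant}) by induction, use the secondary invariant~(\ref{eq:secondaryInvariant}) to guarantee slack in the third case, and establish the secondary invariant not when both colors cross the $\lceil D/2\rceil$ threshold but earlier, when they first become the two strictly most frequent top colors. Your derivation of the bin bound from the main invariant (the count of bins equals $N_c$ exactly at the moment a bin is opened, since case~1 fires only when all bins are $c$-bins) is in fact a slightly cleaner route than the paper's Claim~\ref{cl:BAFineqForColors}. However, the proposal has a genuine gap exactly at the point you yourself flag as the main obstacle: the base case of the secondary invariant is not proved, only announced as ``a careful case analysis.'' This is the hardest step of the whole theorem, and it does not follow from bookkeeping alone. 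The paper's Claim~\ref{clm:OptAlgBaseCaseSecondary} needs a specific quantitative input: just before the transition step $k$, with $N_{\mathrm{b},k-1}\geq N_{\mathrm{w},k-1}$, there must exist a third color $\mathrm{r}$ with $N_{\mathrm{r},k-1}\geq N_{\mathrm{w},k-1}$ (otherwise $\mathrm{b},\mathrm{w}$ were already the strict top pair), so $N_{\mathrm{w},k-1}$ is at most half of the non-black bins, i.e.\ $N_{\mathrm{w},k-1}\leq (D_{k-1}-\alpha_{\mathrm{b},k-1})/2$, where the total-bin bound $\lceil 1.5D_{k-1}\rceil$ is used to count the non-black bins. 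From this one gets $\alpha_{\mathrm{w},k}\leq -\alpha_{\mathrm{b},k}/2+1$ (or $+0.5$), and only then does the main invariant for black close~(\ref{eq:secondaryInvariant}). One must also verify that the transition item is necessarily packed by case~2, and handle the possibility that the identity of the two strictly most frequent colors changes later (re-establishing the base case for the new pair). None of this is in your write-up, so the induction is not actually closed.

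A smaller point: your description of the tie-breaking in case~3 is reversed. If $\alpha_{\mathrm{b}}<\vari{CD}_{\mathrm{b}}$, the third-color item is placed on a \emph{white} bin: the color with strict slack is the one whose bin count is left untouched (its $\vari{CD}$ may drop by one, absorbed by the slack), while the other color's invariant survives because its $N$ decreases. Saying the item is sent ``into the bin of the color with the slack'' would break the argument if taken literally.
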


\begin{proof} 
First we show that keeping the main invariant~(\ref{eq:mainInvariant})
for each color $c$, i.e., $\alpha_c \leq \vari{CD}_c$, is sufficient
for the algorithm to create at most $\lceil 1.5D\rceil$ bins. This
implies both that the algorithm is well defined since there are at
most two colors with $N_c > \lceil D/2\rceil$, and that the algorithm
is $1.5$-competitive, since the maximal discrepancy equals the optimum.

\begin{claim}
\label{cl:BAFineqForColors}
After packing the $t$-th item,
if we suppose that $N_{c,i} - \lceil D_i/2\rceil \leq \vari{CD}_{c,i}$ for all colors $c$ and for all $i<t$,
the algorithm uses at most $\lceil 1.5D_t\rceil$ bins.
\end{claim}

\begin{proof}
We prove the claim by contradiction:
Suppose that BAF opens a bin with the $k$-th item
in the sequence (for $k \leq t$) and we exceed the $\lceil 1.5D_k\rceil$ limit,
but before the $k$-th item there were at most $\lceil 1.5D_{k-1}\rceil$ bins.
Thus $D_k = D_{k-1}$, since if $D_k = D_{k-1} + 1$, then the
bound also increases with the $k$-th item.

Let $c$ be the color of the $k$-th item. Let the $\ell$-th item be the last non-$c$-item before the $k$-th, so only
$c$-items come after the $\ell$-th item.
None of $c$-items from the $(\ell + 1)$-st to the $k$-th increase the maximal discrepancy $D$,
otherwise if one such item increases $D$, then all following such items also do.
Thus $D_{\ell} = D_k$.

The algorithm must have received $\lceil 1.5D_{\ell} \rceil + 1 - N_{c,\ell}$ of
$c$-items after the $\ell$-th item to open $\lceil 1.5D_{\ell} \rceil + 1$ bins, but then
$$\vari{CD}_{c,k} = \vari{CD}_{c,\ell} + \lceil 1.5D_{\ell} \rceil + 1 - N_{c,\ell}
\geq N_{c,\ell} - \left\lceil\frac{D_{\ell}}{2}\right\rceil + \lceil 1.5D_{\ell} \rceil + 1 - N_{c,\ell}
= D_{\ell} + 1$$
where we used the main invariant for the inequality which holds, because $\ell < k \leq t$.
We get a contradiction, since $\vari{CD}_{c,k} \leq D_{k} = D_{\ell}$.
\end{proof}

We have to deal with the case in which $N_{\mathrm{b}} > \lceil D/2\rceil$ and
$N_{\mathrm{w}} > \lceil D/2\rceil$.  We show that we can maintain the
secondary invariant (\ref{eq:secondaryInvariant}), while black and
white are the two strictly most frequent colors of bins
(even if $N_{\mathrm{b}} \leq \lceil D/2\rceil$ or
$N_{\mathrm{w}} \leq \lceil D/2\rceil$). Then we prove
that the secondary invariant starts to hold when black and white
become the two strictly most frequent colors, i.e.,
$N_c < N_{\mathrm{b}}$ and $N_c < N_{\mathrm{w}}$ for all other colors $c$;
this step must precede the time when the number of bins for the second color gets over the
$\lceil D/2\rceil$ limit.
Therefore we prove by induction that the secondary invariant holds
in certain intervals of the input sequence.

\begin{claim} \label{clm:OptAlgInductionSecondary}
Suppose that black and white are the two strictly most frequent colors of bins
before packing the $t$-th item and that the main invariant~(\ref{eq:mainInvariant}) holds for all
colors and the secondary invariant~(\ref{eq:secondaryInvariant}) also holds before packing the $t$-th item, i.e., 
$N_{c,t-1} - \lceil D_{t-1}/2\rceil \leq \vari{CD}_{c,t-1}$ for all colors $c$ and
$2\alpha_{\mathrm{b},t-1} + 2\alpha_{\mathrm{w},t-1} \leq \vari{CD}_{\mathrm{b},t-1} + \vari{CD}_{\mathrm{w,t-1}} + 1$.
Then the main invariant for all colors and the secondary invariant for black and white hold also after packing the $t$-th item.
\end{claim}

\begin{proof}
First we suppose that the maximal discrepancy $D$ is not changed by the $t$-th item.
We start by showing that the main invariant holds after packing the $t$-th item.
If the $t$-th item is packed using the second case of BAF,
the main invariant holds by Claim~\ref{clm:OptAlgMainInvar}.
(Note that the $t$-th item cannot be packed using the first case of the algorithm,
since $N_{\mathrm{b},t-1} > 0$ and $N_{\mathrm{w},t-1} > 0$.)

Otherwise, if the $t$-th item is packed using the third case,
it holds that $\alpha_{\mathrm{b},t-1} > 0$ and $\alpha_{\mathrm{w},t-1} > 0$.
The main invariant holds for a color $c$ other than black and white,
because $N_{c,t-1} < \lceil D_{t-1}/2\rceil$ which implies $N_{c,t}\leq \lceil D_{t}/2\rceil$.

To prove the main invariant for black and white,
we show by contradiction that the secondary invariant~(\ref{eq:secondaryInvariant})
guarantees that
$\alpha_{\mathrm{b},t-1} < \vari{CD}_{\mathrm{b},t-1}$ or $\alpha_{\mathrm{w},t-1} < \vari{CD}_{\mathrm{w},t-1}$.
Otherwise, if $\alpha_{\mathrm{b},t-1} \geq \vari{CD}_{\mathrm{b},t-1}$ and
$\alpha_{\mathrm{w},t-1} \geq \vari{CD}_{\mathrm{w},t-1}$, the secondary invariant becomes
$2\alpha_{\mathrm{b},t-1} + 2\alpha_{\mathrm{w},t-1} \leq \vari{CD}_{\mathrm{b},t-1} + \vari{CD}_{\mathrm{w},t-1} + 1
\leq \alpha_{\mathrm{b},t-1} + \alpha_{\mathrm{w},t-1} + 1$ which is a contradiction.
Note that we used that $\alpha_{\mathrm{w},t-1}$ and $\alpha_{\mathrm{b},t-1}$
are integral and positive.

We now distinguish three cases according to the color of the $t$-th item:
\begin{compactitem}
\item The $t$-th item is black: Then it is packed into a white bin.
The main invariant for black holds after packing the item, because
both $N_{\mathrm{b}}$ and $\vari{CD}_{\mathrm{b}}$ increase, 
and the main invariant for white holds, since
$N_{\mathrm{w}}$ decreases and $\vari{CD}_{\mathrm{w}}$ decreases by at most one.
($\vari{CD}_{\mathrm{w}}$ stays the same when it is zero.)
\item When the $t$-th item is white, the situation is symmetric to the previous case.
\item We pack the $t$-th item of another color into a white bin if
$N_{\mathrm{b},t-1} - \lceil D_{t-1}/2\rceil < \vari{CD}_{\mathrm{b},t-1}$,
otherwise into a black bin. 
If it is packed into a white bin, $N_{\mathrm{w}}$ decreases and $\vari{CD}_{\mathrm{w}}$ decreases by at most one,
thus the main invariant holds for white.
The main invariant holds for black too, since $N_{\mathrm{b}}$ stays the same
and $\vari{CD}_{\mathrm{b}}$ decreases by at most one, but the main invariant held strictly for black
before packing the $t$-th item.

If the $t$-th item is packed into a black bin,
we have $N_{\mathrm{w},t-1} - \lceil D_{t-1}/2\rceil < \vari{CD}_{\mathrm{w},t-1}$
and the situation is symmetric as if the $t$-th item is packed into a white bin.
\end{compactitem}

It remains to show that the $t$-th item does not violate the secondary invariant.
There are again three cases according to the color of the $t$-th item:
\begin{compactitem}
\item The $t$-th item is black: Then it is packed into a white bin in both second and third cases of the algorithm.
Thus $\alpha_{\mathrm{b}}$ increases and $\alpha_{\mathrm{w}}$ decreases,
so the left-hand side of the inequality stays the same.
Also the right-hand side does not change or even increases as $\vari{CD}_{\mathrm{b}}$ increases
and $\vari{CD}_{\mathrm{w}}$ decreases by at most one. ($\vari{CD}_{\mathrm{w}}$ stays the same when it is zero.)
\item When the $t$-th item is white, the situation is symmetric to the previous case.
\item The $t$-th item has another color than black and white:
Then it is packed into a white or black bin in both second and third cases of the algorithm.
Thus one of $\alpha_{\mathrm{w}}$ and $\alpha_{\mathrm{b}}$ decreases and the other one stays the same,
while both $\vari{CD}_{\mathrm{b}}$ and $\vari{CD}_{\mathrm{w}}$ decrease by at most one.
The secondary invariant holds as the left-hand side decreases by two
and the right-hand side decreases by at most two.
\end{compactitem}

Otherwise $D$ increases with an incoming item, thus $\alpha_{c'}$ for each color $c'$ decreases
if $D$ becomes odd. We follow the same proof as if $D$ stays the same,
and the eventual additional decrease of $\alpha_{c'}$
can only decrease the left-hand sides of the main and secondary
invariants.
\end{proof}

Note that in the previous proof, $\alpha_{\mathrm{b}}$ or $\alpha_{\mathrm{w}}$
can be negative in the secondary invariant.
We show the base case of the secondary invariant, i.e., that it
starts to hold when two colors become the two strictly most frequent
colors of bins.

\begin{claim}  \label{clm:OptAlgBaseCaseSecondary}
Suppose that after packing the $k$-th item it starts to hold that
$N_c < N_{\mathrm{b}}$ and $N_c < N_{\mathrm{w}}$ for all other colors $c$, i.e.,
black and white become the two strictly most frequent colors.
Suppose also that the main invariant holds all the time before packing the $k$-th item.
Then $2\alpha_{\mathrm{b},k} + 2\alpha_{\mathrm{w},k} \leq \vari{CD}_{\mathrm{b},k} + \vari{CD}_{\mathrm{w},k} + 1$,
i.e., the secondary invariant holds after packing the $k$-th item.
\end{claim}

\begin{proof}
Assume w.l.o.g.\ that $N_{\mathrm{b},k} \geq N_{\mathrm{w},k}$.
If $N_{\mathrm{b},k} = N_{\mathrm{w},k}$, we also suppose w.l.o.g.\ that $N_{\mathrm{b},k-1} \geq N_{\mathrm{w},k-1}$.

First we show by contradiction that always $N_{\mathrm{b},k-1} \geq N_{\mathrm{w},k-1}$. Otherwise if $N_{\mathrm{b},k-1} < N_{\mathrm{w},k-1}$, then $N_{\mathrm{b},k} > N_{\mathrm{w},k}$
(note that $N_{\mathrm{b},k} = N_{\mathrm{w},k}$ would imply $N_{\mathrm{b},k-1} \geq N_{\mathrm{w},k-1}$).
This can happen only when a black item goes into a white bin, but then the numbers of black and white bins are swapped, hence black
and white were already the two strictly most frequent colors before the $k$-th item
which contradicts the assumption of the claim.
We conclude that $N_{\mathrm{b},k} \geq N_{\mathrm{w},k}$ and $N_{\mathrm{b},k-1} \geq N_{\mathrm{w},k-1}$.

Before the $k$-th item the number of non-black bins is at most
$\lceil 1.5D_{k-1}\rceil - N_{\mathrm{b},k-1} = D_{k-1} - \alpha_{\mathrm{b},k-1}$,
since there are at most $\lceil 1.5D_{k-1}\rceil$ bins by Claim~\ref{cl:BAFineqForColors}
(we use that the main invariant holds before packing the $k$-th item).
As we have $N_{\mathrm{b},k-1} \geq N_{\mathrm{w},k-1}$ and
black and white are not the two strictly most frequent colors before the $k$-th item,
there must be a color r $\not\in \{ \mathrm{b}, \mathrm{w} \}$ such that $N_{\mathrm{r},k-1} \geq N_{\mathrm{w},k-1}$
(let the color be red w.l.o.g.). Therefore the number of white bins is at most half of the number of non-black bins,
i.e., $N_{\mathrm{w},k-1} \leq (D_{k-1} - \alpha_{\mathrm{b},k-1})/2$.

We show by contradiction that the $k$-th item must be packed using the second case of the algorithm.
(Note that BAF cannot use the first case, since otherwise all bins would have the same color after
packing the $k$-th item.)
If the item is packed using the third case, it must hold that
$N_{\mathrm{b},k-1} \geq \lceil D_{k-1}/2\rceil +1$ and $N_{\mathrm{r},k-1}\geq \lceil D_{k-1}/2\rceil +1$.
Since there are at most $\lceil 1.5D_{k-1}\rceil$ bins by Claim~\ref{cl:BAFineqForColors},
we get $N_{\mathrm{w},k-1} \leq \lfloor D_{k-1}/2\rfloor -2$, but then the $k$-th item
cannot cause $N_{\mathrm{w},k} > N_{\mathrm{r},k}$.

Therefore BAF packs the $k$-th item using the second case
and it follows that the main invariant holds after packing the $k$-th item for all colors
by Claim~\ref{clm:OptAlgMainInvar}.

Observe that by packing the $k$-th item, the number of white bins must increase,
or the number of red bins must decrease, or both.
Note that the $k$-th item can have any color, not only white. 
We distinguish two cases: The $k$-th item is white
and the $k$-th item is not white.

If the $k$-th item is white,
we have $\alpha_{\mathrm{b},k} \leq \alpha_{\mathrm{b},k-1}$, as the number of black bins does not increase
(note that there is an inequality because of a possible increase of $D$ or a decrease of $N_{\mathrm{b}}$).
We get

\begin{align*}
\alpha_{\mathrm{w},k} &= N_{\mathrm{w},k} - \left\lceil\frac{D_{k}}{2}\right\rceil
= N_{\mathrm{w},k-1} + 1 - \left\lceil\frac{D_{k}}{2}\right\rceil
\leq \frac{D_{k-1} - \alpha_{\mathrm{b},k-1}}{2} + 1 - \left\lceil\frac{D_{k}}{2}\right\rceil \\
&\leq \frac{D_{k} - \alpha_{\mathrm{b},k}}{2} + 1 - \left\lceil\frac{D_{k}}{2}\right\rceil
\leq -\frac{\alpha_{\mathrm{b},k}}{2} + 1.
\end{align*}
where we used $N_{\mathrm{w},k-1} \leq (D_{k-1} - \alpha_{\mathrm{b},k-1})/2$ for the first inequality and
$D_{k-1} - \alpha_{\mathrm{b},k-1} \leq D_{k} - \alpha_{\mathrm{b},k}$ for the second inequality
which follows from $\alpha_{\mathrm{b},k} \leq \alpha_{\mathrm{b},k-1}$.

We know that
$\alpha_{\mathrm{w},k}\leq -\alpha_{\mathrm{b},k}/2 + 1$.
Therefore
$$2\alpha_{\mathrm{w},k}+2\alpha_{\mathrm{b},k}
\leq -\alpha_{\mathrm{b},k} + 2 + 2\alpha_{\mathrm{b},k}
= \alpha_{\mathrm{b},k} + 2 \leq \vari{CD}_{\mathrm{b},k} + 2
\leq \vari{CD}_{\mathrm{w},k} + \vari{CD}_{\mathrm{b},k} + 1
$$
where we use the main invariant~(\ref{eq:mainInvariant}) for black color for the second inequality
and $\vari{CD}_{\mathrm{w},k} \geq 1$ for the third inequality which holds, because the $k$-th item is white.

Otherwise the $k$-th item is not white and it is packed into a bin of another color than black and white,
otherwise $N_{\mathrm{b}}$ or $N_{\mathrm{w}}$ decreases, thus black
and white cannot become the two strictly most frequent colors. After packing the $k$-th item
we have $\alpha_{\mathrm{b},k} \leq \alpha_{\mathrm{b},k-1} + 1$, as the $k$-th item may be black,
therefore $D_{k-1} - \alpha_{\mathrm{b},k-1} \leq D_{k} - \alpha_{\mathrm{b},k} + 1$.
Since the number of white bins does not change, we get
\begin{align*}
\alpha_{\mathrm{w},k} &= N_{\mathrm{w},k} - \left\lceil\frac{D_{k}}{2}\right\rceil
= N_{\mathrm{w},k-1} - \left\lceil\frac{D_{k}}{2}\right\rceil
\leq \frac{D_{k-1} - \alpha_{\mathrm{b},k-1}}{2} - \left\lceil\frac{D_{k}}{2}\right\rceil \\
&\leq \frac{D_{k} - \alpha_{\mathrm{b},k} + 1}{2} - \left\lceil\frac{D_{k}}{2}\right\rceil
\leq -\frac{\alpha_{\mathrm{b},k}}{2} + 0.5.
\end{align*}

In this case we have $\alpha_{\mathrm{w},k}\leq -\alpha_{\mathrm{b},k}/2 + 0.5$.
Therefore
$$2\alpha_{\mathrm{w},k}+2\alpha_{\mathrm{b},k}
\leq -\alpha_{\mathrm{b},k} + 1 + 2\alpha_{\mathrm{b},k}
= \alpha_{\mathrm{b},k} + 1 \leq \vari{CD}_{\mathrm{b},k} + 1
\leq \vari{CD}_{\mathrm{w},k} + \vari{CD}_{\mathrm{b},k} + 1
$$
where we use the main invariant~(\ref{eq:mainInvariant}) for black color for the second inequality.
Hence the secondary invariant~(\ref{eq:secondaryInvariant}) holds.
\end{proof}

\begin{figure}[!ht]
\centerline{\includegraphics[height=1.9cm]{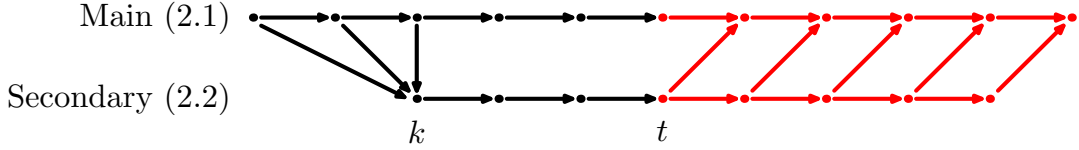}}
\caption{An illustration of dependencies of the main and secondary invariants.
The horizontal axis represents time. 
An invariant at a certain time represented by point $P$ follows from invariants from which there is an arrow to $P$.
After packing the $k$-th item (time $k$) black and white become the two strictly most frequent and
after the $t$-th item (time $t$) it starts to hold that $N_{\mathrm{b}} > \lceil D/2\rceil$ and
$N_{\mathrm{w}} > \lceil D/2\rceil$.
Thus in the black part of the figure, BAF uses the first or the second case of the algorithm,
while in the red part BAF uses the third case of the algorithm.}
\label{fig:OptAlgInvariants}
\end{figure}

We now complete the proof of the theorem by putting everything together.
Precisely, we prove that the main invariant holds during
the whole run of the algorithm by induction. 
The main invariant for each color holds trivially at the beginning before any item comes.
When the $t$-th item is packed, there are two cases:
\begin{compactitem}
\item No two colors were the strictly most frequent before the $t$-th item:
BAF keeps the main invariant for all colors by Claim~\ref{clm:OptAlgMainInvar}, since it
must pack the $t$-th item with the first or the second case of the algorithm.
If two colors become the two strictly most frequent after packing the $t$-th item,
the secondary invariant starts to hold by Claim~\ref{clm:OptAlgBaseCaseSecondary};
otherwise the secondary invariant is irrelevant in this case.
\item Two colors were the strictly most frequent:
Let these two colors be black and white w.l.o.g.
Then the main invariant for all colors and the secondary invariant for black and white are kept
by Claim~\ref{clm:OptAlgInductionSecondary}
(even if black and white are not the two strictly most frequent after the $t$-th item).

It may happen that the two strictly most frequent colors change after packing the $t$-th item
(e.g., to black and red). The main invariant for all colors still follows by Claim~\ref{clm:OptAlgInductionSecondary},
but the secondary invariant for the new strictly most frequent colors follows by Claim~\ref{clm:OptAlgBaseCaseSecondary}.
\end{compactitem}

See Figure~\ref{fig:OptAlgInvariants} for an illustration of dependencies of the invariants.

Therefore we can keep the main invariant
$N_c - \lceil D/2\rceil \leq \vari{CD}_c$ for all colors $c$ during
the whole run of the algorithm and the theorem follows by
Claim~\ref{cl:BAFineqForColors}.
\end{proof}

\section{Algorithms for Items of Arbitrary Size}\label{sec:anySize}

\subsection{Lower Bound on Competitiveness of Any Online Algorithm}\label{sec:anySizeLB}

We use the construction by D\'osa and Epstein~\cite{DosEps14} proving the lower bound 2 for two colors
to get a lower bound $2.5$ using three colors. We combine it with the hard instance that shows the lower bound
$1.5$ for zero-size items. 

\begin{theorem} \label{thm:LBforMoreColorsAndGeneralSizes}
For items of at least three colors, there is no deterministic online algorithm with an asymptotic
competitive ratio less than $2.5$. 
\end{theorem}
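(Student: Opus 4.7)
The plan is to concatenate two adversarial constructions. First, run the zero-size three-color construction of Theorem~\ref{thm:LBforColors} with parameter $n$. By Lemma~\ref{l:LBforColorsInstance}, at the end of this first phase there is a color $c$ such that the online algorithm holds at least $\lceil 1.5 n\rceil$ $c$-bins (each with level $0$ and a $c$-item on top), while every restricted offline optimal packing of the first phase uses exactly $n$ bins, also each with a $c$-item on top at level $0$.

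In the second phase, I would append an adapted D\'osa--Epstein two-color construction, using colors $c$ and one additional color $d$. The item sizes are chosen so that, stacked on top of OPT's first-phase bins in an alternating $d/c/d/c\dots$ pattern, the second-phase items exactly fit inside each of OPT's $n$ existing bins without exceeding capacity. Thus OPT opens no new bin and the combined restricted offline optimum stays at $n$. The D\'osa--Epstein adaptive adversary then forces any deterministic online algorithm, starting from the $\lceil 1.5n\rceil$ empty $c$-topped bins, to open at least $n$ further bins during this phase, giving $\vari{ALG} \geq \lceil 1.5n\rceil + n \geq \lceil 2.5n\rceil$ and the asymptotic ratio of $2.5$.

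The main obstacle is the last claim: showing that a D\'osa--Epstein-style construction, whose original guarantee is $2\cdot\vari{OPT}$ bins starting from an empty packing, still forces $n$ new bins on the algorithm in the presence of $\lceil 1.5n\rceil$ pre-existing empty $c$-topped bins. The adaptive adversary has to turn ALG's extra bins into a liability: whenever ALG places a $d$-item on one of its $c$-bins, that bin is ``committed'' to a specific level and top color; the remaining still-$c$-topped bins cannot receive the subsequent $c$-items of size just above $1/2$ that the adversary sends, so they cannot be reused to avoid opening new bins. Careful bookkeeping, in the spirit of the phase analysis of Theorem~\ref{thm:LBforColors}, is needed to verify that whether ALG concentrates its second-phase items in few existing bins or disperses them across many, the adversary can always drive ALG through a sequence of forced decisions that accumulate at least $n$ freshly opened bins.
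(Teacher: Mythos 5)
Your plan is exactly the paper's: run the zero-size construction of Theorem~\ref{thm:LBforColors} to get $\lceil 1.5n\rceil$ $c$-topped zero-level bins via Lemma~\ref{l:LBforColorsInstance}, then append a D\'osa--Epstein-style size-based phase construction that forces $n$ more bins while the restricted offline optimum stays essentially $n$. The mechanism you identify --- a huge $c$-item that is blocked from the surviving $c$-topped zero-level bins by color and from the committed bins by size --- is precisely the engine of the paper's second part.

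However, the step you flag as ``careful bookkeeping'' is the actual content of the proof, and as stated your sketch would not quite close. Two concrete points. First, huge items ``of size just above $1/2$'' do not work: such an item could still share a bin with the many small items the adversary has already placed, so the algorithm could recycle committed bins. The paper instead uses geometrically shrinking scales $\delta_i = 1/(5^i\cdot 6n)$ and huge items of size $1-2\delta_i$, so that the only previously sent item small enough to cohabit with the $i$-th huge item is the single regular black item of size $\delta_i$ from the same phase --- and the adversary sends the huge item only after observing that this black item landed in a bin of non-zero level. Second, you need a two-sided phase accounting: in each phase the adversary either sees the small black item go to a new/zero-level bin (so the count of non-zero-level bins increases, and after $\lceil 2.5n\rceil$ such phases you are done directly) or gets to release a huge item that must open a bin outside $W$ (and after $n$ of those, $\vari{ALG}\geq |W|+n\geq \lceil 2.5n\rceil$). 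Without bounding the number of phases of the first kind, the adversary has no termination guarantee. Finally, a minor correction: the optimum of the concatenated instance is $n+1$, not $n$ (the discrepancy rises to $n+1$, and one extra bin absorbs the leftover small items); this is harmless asymptotically but your claim that OPT ``opens no new bin'' is not what one proves.
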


\begin{proof}
Throughout the whole proof the adversary uses only three colors denoted by black, white and red 
and abbreviated by b, w and r in formulas. 
Let $n > 1$ be a large integer.
The adversary starts with the hard instance for zero-size items from the proof of Theorem~\ref{thm:LBforColors}
with the optimum equal to $n$. By Lemma~\ref{l:LBforColorsInstance} there are at least $\lceil 1.5n \rceil$ bins 
of the same color, w.l.o.g.\ white.
Let $W$ be the set of bins that are white after the first part of the instance ($|W| \geq \lceil 1.5n \rceil$).
We also know that $\vari{CD}_{\mathrm{w}} \leq n$, $\vari{CD}_{\mathrm{b}} = 0$,
and $\vari{CD}_{\mathrm{r}} = 0$. 

The second part of the instance is a slightly simplified construction by D\'osa and Epstein~\cite{DosEps14}.
Their idea goes as follows: The adversary sends the instance in phases, 
each starting with white and black small items. If the black item is put into an already opened bin,
we send a huge white item that can be put only on the small black item.
Therefore the algorithm has to put the huge white item in a new bin, but an optimal offline
algorithm puts the small black item into a new bin and the huge white item on it.
Otherwise, if the small black item is put into a new bin, the phase is finished:
The online algorithm opened a bin in the phase, while an optimal offline algorithm does not need to.
This way an online algorithm is forced to behave oppositely to an optimal offline algorithm.

We formalize this idea by the following adversarial algorithm. Let $\varepsilon = 1/(6n)$ and
$\delta_i = 1/(5^i \cdot 6n)$. The adversary uses the items of the following types:
\begin{compactitem}
\item regular white items of size $\varepsilon$,
\item regular black items of size $\delta_i$ for some $i \geq 1$,
\item special black items of size $3\delta_i$ for some $i \geq 1$,
\item huge white items of size $1 - 2\delta_i$ for some $i \geq 1$.
\end{compactitem}

Let $i$ be the index of the current phase and let $j$ be the number of huge white items
in the instance so far. The adversarial algorithm is as follows:
\begin{compactenum}
\item Let $i = 0$ and $j = 0$.
\item If $j = n$ or if $i = \lceil 2.5n\rceil$, then stop.
\item Let $i = i + 1$. Send $\left(\nolinefrac{\mathrm{white}}{\varepsilon}, \nolinefrac{\mathrm{black}}{\delta_i}\right)$, i.e., a group consisting
of a regular white item and a regular black item.
\item If the regular black item goes to a new bin or to a bin with level zero, go to the step 2
(continue with the next phase).
\item Let $j = j + 1$. Send
$\left(\nolinefrac{\mathrm{black}}{3\delta_i}, \nolinefrac{\mathrm{white}}{1 - 2\delta_i}, \nolinefrac{\mathrm{black}}{\delta_i}\right)$.
Then go to the step 2 (continue with the next phase).
\end{compactenum}

First we show that we can pack the whole list of items into $n + 1$ bins and then that no huge white item
can be packed by an online algorithm into a bin from the set $W$, i.e., one of $\lceil 1.5n \rceil$ bins which are white after the first part
with zero-size items.

\begin{lemma}
$\vari{OPT} = n + 1.$
\end{lemma}

\begin{proof}
The first part of the items, i.e., the hard instance for zero-size items, has the optimum exactly $n$.
Moreover, all bins are white by Lemma~\ref{l:LBforColorsInstance}.
Each of $j \leq n$ huge white items is packed with the two regular black items from the same phase,
thus creating $j$ full bins with a black item at the bottom.
All these bins thus can be combined with the bins from the first part of the instance.
The remaining items have alternating colors and total size of at most
$i \cdot 2\varepsilon \leq \lceil 2.5n\rceil \cdot 2/(6n) \leq 1$
(recall that $i$ is the index of the current phase and that each black item is smaller than $\varepsilon$),
so they can be put into an additional bin.

Note that the maximal discrepancy $\vari{LB}_2$ is $n + 1$, since we end the first part by $n$ white items and 
start the second part by a regular white item. Hence $\vari{OPT} = n + 1$.
\end{proof}

We now analyze how an online algorithm behaves on the second part of the instance.

\begin{lemma}
After the $i$-th phase the number of bins with a non-zero level is at least~$i$.
Moreover, no huge white item goes into a bin from the set $W$.
\end{lemma}

\begin{proof}
We use the fact that $3\delta_i < \varepsilon$, i.e., all black items are smaller than $\varepsilon$,
and that a huge white item of size $1 - 2\delta_i$ cannot be packed with a black item of size at least $\delta_j$ for any $j < i$.

We show that in each phase the number of bins with a non-zero level increases by at least one.
This holds trivially, if the second item in a phase, denoted by $s$, goes to a new bin or to a bin with level zero.
Otherwise, if $s$ is put into a bin of non-zero level, the adversary continues the phase by sending
three other items, most importantly a huge white item~$h$. The item~$s$ is the only one sent before $h$
that is sufficiently small to be packed into a single bin with $h$, but $s$ is in a bin with another non-zero item.
Therefore $h$ must be packed into a new bin or a bin with level zero. This proves the first statement of the lemma.

For the second statement, note that if $h$ goes to a bin with zero-size items only, the bin cannot be white,
but all the bins from the set $W$ that have still level zero (while packing $h$) are white.
As we already observed, $h$ also does not go to a bin from $W$ that has a non-zero level.
\end{proof}

By the previous lemma we know that if the second part of the instance ends with $i = \lceil 2.5n\rceil$,
there are $\lceil 2.5n\rceil$ non-zero bins. If the instance stops by $j = n$,
the online algorithm has at least $|W| + n \geq \lceil 2.5n\rceil$ open bins,
since it opens bins in $W$ on the first part of the instance and it must put $n$ huge white items into other bins.
As $\vari{OPT} = n + 1$, we get that the ratio between the online algorithm
and the optimum tends to $2.5$ as $n$ goes to infinity.
\end{proof}

Note that if we replace the first part of the instance by $n$ white items, we get the lower bound of 2 using only two colors.

\subsection{$\mathbf{3.5}$-competitive Algorithm}\label{sec:anySizeUB}

We now show that there is a constant competitive online algorithm even for items of sizes between 0 and 1.
We combine algorithms Pseudo from~\cite{balogh14} and our algorithm BAF that is $1.5$-competitive for zero-size items.
The algorithm Pseudo uses \textit{pseudo bins} which are bins of unbounded capacity.

\begin{center}
\fbox{\parbox{0.9\textwidth}{
{\bf Pseudo-BAF:}
\begin{compactenum}
\item First pack an incoming item into a pseudo bin using the algorithm BAF (treat the item as a zero-size item).
\item In each pseudo bin, items are packed into unit capacity bins using Next Fit.
\end{compactenum}
}}
\end{center}

\begin{theorem}
\label{thm:PseudoBAF}
The algorithm Pseudo-BAF for Colored Bin Packing is absolutely
$3.5$-com\-pe\-ti\-ti\-ve. In the parametric case when items have size
at most $1/d$, for a~real $d \geq 2$, it uses at most $\lceil (1.5 +
d/(d-1))\vari{OPT}\rceil$ bins. Moreover, the analysis is
asymptotically tight.
\end{theorem}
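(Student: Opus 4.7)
The plan is to analyze the two stages of Pseudo-BAF separately and then combine the bounds. Let $D$ denote the color discrepancy and $S$ the total item size of the input. The first stage runs BAF on items treated as zero-size; by Theorem~\ref{thm:optAlg}, it opens at most $m\le\lceil 1.5 D\rceil \le \lceil 1.5\,\vari{OPT}\rceil$ pseudo bins. A key structural observation is that inside every pseudo bin the items already alternate in color (since BAF respects the color rule on zero-size items), so subdividing a pseudo bin into consecutive real bins via Next Fit cannot create a color conflict, and the second stage is purely about sizes.

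For a pseudo bin $P_i$ of total level $s_i$, let $k_i$ be the number of real bins Next Fit produces inside it. Whenever Next Fit closes a real bin, the next item did not fit. In the parametric case (items of size at most $1/d$) this forces the closed bin to have level strictly greater than $(d-1)/d$, and summing over the first $k_i-1$ bins yields $k_i-1 < \tfrac{d}{d-1}\,s_i$; in the general case the standard Next Fit pairing $L_j + L_{j+1} > 1$ gives $k_i < 2 s_i + 1$. Summing over all non-empty pseudo bins and using $S\le\vari{OPT}$ together with $m\le\lceil 1.5\,\vari{OPT}\rceil$ produces the strict inequality
\[
\vari{ALG} \;=\; \sum_i k_i \;<\; m + \tfrac{d}{d-1}\,S \;\le\; \bigl\lceil 1.5\,\vari{OPT}\bigr\rceil + \tfrac{d}{d-1}\,\vari{OPT},
\]
and analogously $\vari{ALG} < \lceil 1.5\,\vari{OPT}\rceil + 2\,\vari{OPT}$ in the general case.

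The main obstacle is turning the strict inequality into the claimed \emph{absolute} bounds, which is a delicate integer-rounding step: depending on the parity of $\vari{OPT}$ both sides are integers or half-integers, and the strictness just suffices. I would complete this with a short case analysis on $\vari{OPT}\bmod 2$ (and, for the parametric bound, on whether $(d/(d-1))\,\vari{OPT}$ is an integer), invoking the elementary observation $\lfloor x + \tfrac12\rfloor \le \lceil x\rceil$, to conclude $\vari{ALG}\le \lceil (1.5 + d/(d-1))\,\vari{OPT}\rceil$ in the parametric case and $\vari{ALG}\le 3.5\,\vari{OPT}$ in the general case.

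For the asymptotic tightness, the plan is to concatenate the zero-size hard instance of Theorem~\ref{thm:LBforColors} (which forces BAF to open $\lceil 1.5 n\rceil$ pseudo bins against optimum $n$) with items of size just above $(d-1)/d$ (respectively just above $\tfrac12$ in the general case) of a color chosen so that BAF routes them into existing pseudo bins and Next Fit is forced to open a fresh real bin for each. Balancing the contributions of $D$ and $S$ so that $\vari{OPT}$ stays at $\Theta(n)$ while the size items add approximately $(d/(d-1))n$ real bins yields the ratio $1.5 + d/(d-1)$ asymptotically. The most delicate point is synchronizing the color pattern demanded by Theorem~\ref{thm:LBforColors} with the size pattern defeating Next Fit while still admitting a compact offline packing of cost $(1+o(1))n$.
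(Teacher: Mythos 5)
Your upper-bound argument is essentially the paper's: bound the number of pseudo bins by $\lceil 1.5\,\vari{OPT}\rceil$ via Theorem~\ref{thm:optAlg} and $\vari{LB_2}\le\vari{OPT}$, charge the extra real bins produced by Next Fit against the size bound $\vari{LB_1}\le\vari{OPT}$ using the pairing $L_j+L_{j+1}>1$ in the general case or the level-$(d-1)/d$ argument in the parametric case, and finish with an integer-rounding step. That part is sound; the only caveat is that your per-pseudo-bin strict inequalities degenerate to equalities for pseudo bins consisting solely of zero-size items ($k_i=1$, $s_i=0$), so you should sum the non-strict form and dispose of the all-zero-size case separately, which is harmless.

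The genuine gap is in your tightness construction. First, in the parametric case an item of size ``just above $(d-1)/d$'' violates the hypothesis that all items have size at most $1/d$ whenever $d>2$, since then $(d-1)/d>1/2\ge 1/d$. Second, and more fundamentally, even in the general case items of size just above $1/2$ cannot create any gap between the algorithm and the optimum: no two of them fit in one bin, so the optimum must also open a fresh bin for each, and your instance would certify a ratio of only about $2.5$. For tightness the wasted bins must contain items that the \emph{optimum} can consolidate but the algorithm cannot. Concretely, the paper sends $(n-1)$ groups $\bigl(\nolinefrac{\mathrm{white}}{\varepsilon},\nolinefrac{\mathrm{black}}{1},\nolinefrac{\mathrm{black}}{\varepsilon}\bigr)$: BAF places each white $\varepsilon$-item and the following size-$1$ item into one pseudo bin, which Next Fit must split into separate real bins, and isolates each trailing black $\varepsilon$-item in its own pseudo bin, yielding $3(n-1)$ bins; the optimum instead packs all $2(n-1)$ tiny items, whose colors alternate, into a single extra bin, so $\vari{OPT}=n$. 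Only after this is the zero-size hard instance appended to add the remaining $\lceil(n-1)/2\rceil$ bins without increasing the optimum, giving $\lceil 3.5(n-1)\rceil$ bins in total. The parametric case requires the analogous modification of the Balogh et al.\ instance using items of size at most $1/d$, not items of size close to $(d-1)/d$.
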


\begin{proof}
In the general case for items between 0 and 1 we know that two
consecutive bins in one pseudo bin have total size strictly more than one,
since no two consecutive items of the same color are in a pseudo bin.
In each pseudo bin we match each bin with an odd index with the following bin
with an even index, therefore we match all bins except at most one in each pseudo bin.
Moreover, the total size of a pair of matched bins is more than one.
Therefore the number of matched bins is strictly less than $2 \cdot \vari{LB_1} \leq 2 \cdot \vari{OPT}$,
i.e., at most $2 \cdot \vari{OPT} - 1$.
The number of not matched bins is at most the number of pseudo bins
created by the algorithm BAF which uses at most
$\lceil 1.5 \cdot \vari{LB_2}\rceil \leq \lceil 1.5 \cdot \vari{OPT}\rceil \leq 1.5 \cdot \vari{OPT} + 0.5$ bins.
Summing both bounds, the algorithm Pseudo-BAF creates at most $3.5 \cdot \vari{OPT}$ bins.

For the parametric case, inside each pseudo bin all real bins except the last one
have level strictly more than $(d - 1)/d$, so their number is strictly less than $d/(d-1) \cdot \vari{OPT}$,
i.e., at most $\lceil d/(d-1) \cdot \vari{OPT}\rceil - 1$.  
The number of pseudo bins is still bounded by $\lceil 1.5 \cdot \vari{OPT}\rceil$,
thus the algorithm Pseudo opens at most $\lceil (1.5 + d/(d-1))\vari{OPT}\rceil $ bins.

We show the tightness of the analysis by combining hard instances
for Pseudo by Balogh et al.~\cite{balogh14}
and for BAF from the proof of Theorem~\ref{thm:LBforColors}.
More concretely, for $n$ (a big integer) let $\varepsilon = 1/(2n)$.
The adversary sends $n - 1$ groups of three items, specifically
$(n - 1) \times \left( \nolinefrac{\mathrm{white}}{\varepsilon}, \nolinefrac{\mathrm{black}}{1},
\nolinefrac{\mathrm{black}}{\varepsilon} \right).$

The algorithm creates one pseudo bin containing every first and second item from each group and $n - 1$ pseudo bins,
each containing only the third item from a group.
Moreover, the first pseudo bin is split into $2 \cdot (n - 1)$ unit capacity bins (each item is in a separate bin), so there are $3 \cdot (n - 1)$ bins.
The optimum for $n - 1$ groups is $n$, because we can pack all tiny items together in one bin and $\vari{LB_1} = n$.

Then the adversary sends the hard instance with zero-size items from the proof of Theorem~\ref{thm:LBforColors}
and BAF creates additional $\lceil (n - 1)/2 \rceil$ pseudo bins,
while the optimum on the instance is $n - 1$. Pseudo-BAF now have $\lceil 3.5 \cdot (n - 1) \rceil$ bins.
Observe that the optimal packing for $n - 1$ groups does not need to be changed to put there zero-size items, thus $\vari{OPT} = n$.

For the parametric case, the adversary uses a modification of the first part of the hard
instance by Balogh et al.~\cite{balogh14} on which Pseudo creates at least $(d - 1) n + dn$ bins,
while its optimal packing needs $(d - 1) n + 1$ bins.
Moreover, the adversary can continue with the hard instance with zero-size items
like in the general case and force Pseudo-BAF to create additional $\lceil (d - 1) n/2\rceil$ bins
without increasing the optimum. Therefore
Pseudo-BAF ends up with asymptotically $(1.5 + d/(d-1)) \vari{OPT}$ bins.
\end{proof}

\subsection{Classical Any Fit Algorithms} \label{sec:AFalgsOnColors}
We analyze algorithms First Fit, Best Fit and Worst Fit and we find that
they are not constant competitive.  Their competitiveness cannot be
bounded by any function of the number of colors even for only three
colors, in contrast to their good performance for two colors.

\begin{proposition}
First Fit and Best Fit are not constant competitive.
\end{proposition}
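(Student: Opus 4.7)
The plan is to construct, for every target $M \geq 1$, an input on three colors with items of small positive size for which both FF and BF open at least $M$ bins while $\vari{OPT}$ stays bounded by an absolute constant, thereby ruling out any constant competitive ratio. Throughout I would use three colors (say black, white, red) and a scale parameter $\varepsilon > 0$ chosen much smaller than $1/M$.

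I would split the instance into a short \emph{priming part} followed by $M$ identical \emph{growth phases}. The priming part is engineered so that, after FF processes it, there is a small fixed set of ``trap bins'' whose top colors and residual capacities have been tuned in advance. In each growth phase I would then send a short block of items (of a carefully chosen color pattern and sizes on the scale of $\varepsilon$) whose first member is simultaneously forbidden by color from some trap bins and by size from the remaining ones, forcing FF to open exactly one new bin per phase. Iterating the growth phase $M$ times then yields at least $M$ fresh bins in FF's packing.

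At the same time, I would exhibit an explicit packing of the whole instance into $O(1)$ bins, witnessing $\vari{OPT} = O(1)$. The third color plays an essential role here: it serves as a ``separator'' that allows $\vari{OPT}$ to interleave growth-phase items with third-color items across a constant number of bins, breaking the color alternation pattern that traps FF. Checking that this packing obeys both the unit capacity and the color alternation constraint reduces to a direct case analysis on the fixed number of bins used by $\vari{OPT}$.

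The hard part will be the simultaneous calibration of sizes and colors so that in every growth phase every FF bin is blocked by either color or size, while $\vari{OPT}$'s bins always keep both constraints slack at the placement point. This is exactly where three colors make a real difference over two: with only two colors, any non-$c$-bin with enough remaining room can absorb an incoming $c$-item, which is what keeps FF and BF constant-competitive in the Black and White case treated later in the paper. For BF the same instance (with minor retuning of residuals in the priming part) works, because BF and FF share the same set of feasible bins at every step; by ordering the trap bins by residual capacity one can ensure that BF's ``fullest feasible'' choice agrees with FF's ``first feasible'' choice throughout the growth phases, so the same bin-by-bin analysis applies.
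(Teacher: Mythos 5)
Your proposal is a plan, not a proof: the entire mathematical content of the statement is the explicit instance, and you leave exactly that part (``the simultaneous calibration of sizes and colors'') unconstructed. Nothing in the writeup pins down the priming part, the growth-phase block, or the $O(1)$ offline packing, so there is nothing to verify. Moreover, the architecture you sketch is more elaborate than what is actually needed, and in one respect points in the wrong direction: you plan to block some bins ``by size'' via tuned residual capacities, but no size-based blocking is necessary. The paper's construction uses only color blocking with uniformly tiny items: with $\varepsilon = 1/(4n)$, send $n \times \left( \nolinefrac{\mathrm{black}}{\varepsilon}, \nolinefrac{\mathrm{black}}{\varepsilon}, \nolinefrac{\mathrm{white}}{\varepsilon}, \nolinefrac{\mathrm{red}}{\varepsilon} \right)$. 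First Fit puts the first black item of each group, the white item, and the red item all into bin~1 (whose top color cycles red $\to$ black $\to$ white $\to$ red), so when the second black item of a group arrives every existing bin has a black top and a new bin must be opened; this yields $n+1$ bins while $\vari{OPT}=2$ (split the two black items of each group between two bins, send the white item to one and the red item to the other). The same trace works for Best Fit because bin~1 is simultaneously the first and the fullest feasible bin at every step.

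If you want to salvage your outline, the key realization you are missing is that no ``trap'' engineering is required: the greedy color-compatibility of FF/BF on a single absorbing bin already forces all copies of a repeated color to collide, and the third color is needed only so that the absorbing bin's top color can cycle with period $>2$ (with two colors, the second black item would always find a white-topped bin, which is why FF and BF stay $3$-competitive there, as you correctly observe). Your claim that FF and BF ``share the same set of feasible bins at every step'' is not by itself enough to transfer the argument from FF to BF --- you must check that they make the same \emph{choice} at every step --- but in the concrete instance above this is immediate. As written, though, the proposal contains no checkable construction and therefore does not establish the proposition.
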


\begin{proof}
The adversary sends an instance of $4n$ items which can be packed into two bins,
but FF and BF create $n + 1$ bins where $n$ is an arbitrary integer.

Let $\varepsilon = 1/(4n)$. The instance is $n \times \left( \nolinefrac{\mathrm{black}}{\varepsilon}, \nolinefrac{\mathrm{black}}{\varepsilon},
\nolinefrac{\mathrm{white}}{\varepsilon}, \nolinefrac{\mathrm{red}}{\varepsilon} \right)$.
An optimal packing can be obtained by putting black items from each group into the first and the second bin,
the white item into the first bin and the red item into the second bin.

FF and BF pack the first group into two bins, both with a black bottom item,
and white and red items fall in the first bin.
The first black item, the white item and the red item from each following group are packed into the first bin,
while the second black item is packed into a new bin.
Therefore these algorithms create one bin with all white and red items and all first black items from each group and $n$ bins with a single black item.

Hence FF and BF create $(n+1)/2\cdot \vari{OPT}$ bins.
\end{proof}

Note that WF on such instance creates an optimal packing,
but the instance can be modified straightforwardly to obtain a bad behavior for WF.

\begin{proposition}
Worst Fit is not constant competitive.
\end{proposition}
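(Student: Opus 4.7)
The plan is to exhibit, for every positive integer~$k$, an adversarial instance on which the ratio of Worst Fit's bins to the offline optimum is at least~$k$, which rules out any constant competitive ratio. Following the hint in the remark preceding this proposition, the construction starts from the instance $n \times \left(\nolinefrac{\mathrm{b}}{\varepsilon}, \nolinefrac{\mathrm{b}}{\varepsilon}, \nolinefrac{\mathrm{w}}{\varepsilon}, \nolinefrac{\mathrm{r}}{\varepsilon}\right)$ of the previous proposition, on which WF distributes the small $\mathrm{w}$ and $\mathrm{r}$ items evenly between the two initial bins and therefore behaves optimally. To defeat WF, the adversary introduces a size asymmetry inside each group so that WF's preference for the lowest-level bin commits the wrong bin to the wrong top color; from then on the two bins become structurally different, and WF is forced to open new bins round after round.

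The main steps would be: first, fix $\varepsilon = 1/(Cn)$ for a suitable constant $C$ and define a modified sequence in which one item per group is resized (and possibly recolored) so that WF's greedy level-balancing backfires; second, trace WF through the sequence while maintaining a structural invariant on the levels and top colors of the currently open bins, from which one concludes that a new bin is opened in a constant fraction of rounds, yielding $\Omega(n)$ bins; third, exhibit an explicit offline packing that respects the sequence order but uses far fewer bins, relying on the bounded values of $\vari{LB_1}$ (total item size) and $\vari{LB_2}$ (color discrepancy) that the construction enforces.

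The main obstacle is to satisfy three requirements that pull against each other: WF must be punished essentially every round (requiring items that conflict with every currently open bin by color or size), the total item volume must stay sublinear in~$n$ (so that $\vari{LB_1}$ does not match WF's count), and the color discrepancy must likewise grow much slower than WF's bin count (so that $\vari{LB_2}$ does not lower-bound the offline optimum to match WF). The three-color freedom is what lets us side-step these constraints, because traps of different colors can be rotated so that no single color's accumulated count grows too fast; pinning down the exact sizes and color pattern that achieve all three goals simultaneously is the technical heart of the argument, after which the verification is a routine case analysis.
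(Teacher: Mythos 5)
Your high-level plan points in the same direction as the paper's proof---perturb the sizes inside each group of the FF/BF instance so that Worst Fit's lowest-level rule routes items into the wrong bin---but as written it is a research plan, not a proof. You explicitly defer ``pinning down the exact sizes and color pattern'' to later, and that is the entire content of this proposition: without a concrete instance and a traced execution of WF on it, nothing has been established. Moreover, two of the three ``requirements that pull against each other'' that you worry about are essentially free here and do not need any careful balancing: a single group pattern $\left(\nolinefrac{\mathrm{b}}{\cdot},\nolinefrac{\mathrm{b}}{\cdot},\nolinefrac{\mathrm{w}}{\cdot},\nolinefrac{\mathrm{r}}{\cdot}\right)$ already keeps the color discrepancy at $2$ (each group contributes net $0$ to every color's count), and choosing all sizes $O(1/n)$ keeps the total volume at most $1$; no ``rotation of trap colors'' is needed. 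The only genuinely delicate point is the one you leave open: which item to shrink and by how much.

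The missing construction is: let $\varepsilon = 1/(2n)$, $\delta = 1/(6n^2+1)$, and send $n \times \left( \nolinefrac{\mathrm{b}}{\delta}, \nolinefrac{\mathrm{b}}{\varepsilon}, \nolinefrac{\mathrm{w}}{\delta}, \nolinefrac{\mathrm{r}}{\delta} \right)$. The first $\delta$-item opens a bin whose level never exceeds $3n\delta < \varepsilon$, so this bin is permanently the strictly lowest-level open bin; since its top color cycles $\mathrm{b},\mathrm{w},\mathrm{r}$ in step with the arriving $\delta$-items, WF places \emph{every} $\delta$-item there. Consequently each group's second item (black, size $\varepsilon$) finds every open bin topped by a black item ($\delta$-bin topped by the group's first black item, and every $\varepsilon$-bin topped by a black $\varepsilon$-item) and must open a new bin, giving $n+1$ bins, while $\vari{OPT}=2$ exactly as in the FF/BF instance. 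Your invariant-based verification step would work once this instance is fixed, but the claim is not proved until it is.
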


\begin{proof}
The instance is similar to the one in the previous proof, but sizes of items are different in each group.
Let $\varepsilon = 1/(2n)$ and let $\delta = 1/(6n^2 + 1)$. The instance is
$n \times \left( \nolinefrac{\mathrm{black}}{\delta}, \nolinefrac{\mathrm{black}}{\varepsilon},
\nolinefrac{\mathrm{white}}{\delta}, \nolinefrac{\mathrm{red}}{\delta} \right)$.
 
We observe that the optimal packing does not change with other sizes.
However, WF packs all $\delta$-items into the first bin, i.e.,
first black items from each group and all white and red items,
since the level of the first bin stays at most $(3n)/(6n^2 + 1)$,
which is less than $1/(2n)$ as $\varepsilon/\delta > 3n$.
Therefore all second black items are packed into separate bins
and WF creates $n + 1$ bins, while the optimum is two.
\end{proof}

\section{Black and White Bin Packing}\label{sec:BWBP}

For sequences with only two colors,
we improve the upper bound on the absolute competitive ratio of Any Fit algorithms from 5 to 3.
Then we show that Worst Fit performs even better
for items with size of at most $1/d$ (for $d \geq 2$)
as it is $(1 + d/(d-1))$-competitive in this case.
Both bounds are tight by the results of Balogh et al.~\cite{balogh14}
(more precisely, the bound for WF is tight only for an integer $d \geq 2$),
therefore it matches the performance of Pseudo, the online algorithm with the best competitive
ratio known so far. Note that for infinitesimally small items WF is $2$-competitive,
while BF and FF remain $3$-competitive.

Both bounds are tight, since there are instances for FF and
BF on which the competitive ratio is asymptotically 3 and an instance
for WF in the parametric case on which WF uses asymptotically $(1 +
d/(d-1))\vari{OPT}$ bins for an integer $d \geq 2$~\cite{balogh14}
(more precisely, the bound for WF is tight only for an integer $d \geq 2$).

\subsection{Competitiveness of Any Fit Algorithms} \label{sec:AF-BWBP}
\begin{theorem}
\label{thm:AFcomp}
Any algorithm in the Any Fit family is absolutely $3$-competitive for
Black and White Bin Packing.
\end{theorem}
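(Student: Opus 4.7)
The goal is to establish $A(L) \leq 3 \vari{OPT}(L)$, and I plan to prove the stronger inequality $A(L) \leq 2 \vari{LB_1} + \vari{LB_2}$, which suffices since both $\vari{LB_1}$ and $\vari{LB_2}$ are lower bounds on $\vari{OPT}$. The main lever is the defining Any Fit property: when a new bin $B_i$ is opened with an item of size $s_i$ and color $c_i$, every already open bin $B_j$ must be \emph{blocking}, either with current top color equal to $c_i$ (color-blocking) or with current level strictly above $1-s_i$ (size-blocking).

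I would walk through $B_2,\dots,B_n$ in creation order and, for each $B_i$, look at the immediately preceding bin $B_{i-1}$ at the moment $B_i$ is opened. Call the opening a \emph{size-opening} if $B_{i-1}$ was size-blocking at that moment (so, because levels are non-decreasing, even for the final-state levels we have $L_{i-1}+s_i > 1$), and a \emph{color-opening} otherwise (so $B_{i-1}$ had top color $c_i$ at that moment and $L_{i-1}+s_i \leq 1$). Summing the size-opening inequalities in a Next-Fit style — noting that each $L_{i-1}$ and each $s_i$ can appear in at most one such inequality and each of the two sums is bounded by $\vari{LB_1}$ — immediately yields that the number of size-openings is strictly less than $2\vari{LB_1}$.

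The harder part is to bound the number of color-openings by $\vari{LB_2}$. The intuition is that color-openings are the analog of the zero-size phenomenon, where Any Fit is already known to produce exactly $\vari{LB_2}$ bins in Black and White Bin Packing. Concretely, I would introduce, for $c\in\{\mathrm{b},\mathrm{w}\}$, the potential $\alpha_{c,c}(t)$ equal to the number of bins whose bottom and current top are both $c$ at time $t$, and exploit the identity that $\alpha_{\mathrm{w,w}}(t)-\alpha_{\mathrm{b,b}}(t)$ equals the signed prefix color discrepancy of the input up to time $t$, whose absolute value is at most $\vari{LB_2}$. A color-opening of color $c$ bumps $\alpha_{c,c}$ by exactly one; an amortized argument that charges any intervening decrease of $\alpha_{c,c}$ to a prefix item that correspondingly moves the signed discrepancy away from its extremum should then cap the total number of color-openings by $\vari{LB_2}$.

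The main obstacle I anticipate is the interaction between size- and color-openings in the amortization: a size-opening can place a non-$c$ item on top of a $c$-top bin, decreasing $\alpha_{c,c}$ without a matching motion of the overall discrepancy and thus potentially enabling a later color-opening of color $c$ that is not paid for by a discrepancy increment. I expect this to be resolvable by charging each such ``free decrease'' to the size-opening that caused it — whose size contribution is already budgeted against $2\vari{LB_1}$ — yielding the combined bound $A(L) \leq 2\vari{LB_1} + \vari{LB_2}$. A small integrality argument handling the first bin $B_1$ and the strict inequality in the size bound then converts this into the absolute inequality $A(L) \leq 3\vari{OPT}(L)$.
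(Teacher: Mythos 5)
Your bound on size-openings is fine, but the central lemma of your plan---that the number of color-openings is at most $\vari{LB_2}$---is false, and the proposed amortization cannot repair it. Counterexample: take the sequence $k\times\left(\nolinefrac{\mathrm{w}}{0.4},\nolinefrac{\mathrm{w}}{0.4},\nolinefrac{\mathrm{b}}{0.4},\nolinefrac{\mathrm{b}}{0.4}\right)$. Every Any Fit algorithm behaves the same here (up to symmetry): in round $j$ the first white item opens a bin (a size-opening for $j\geq 2$, since all earlier bins have level $0.8$), the second white item opens another bin and this is a color-opening by your definition (the preceding bin has top white, level $0.4$, and $0.4+0.4\leq 1$), and the two black items then fill both new bins to level $0.8$. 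So there are $k$ color-openings, while the sequence $\mathrm{w\,w\,b\,b\,w\,w\,b\,b}\dots$ has $\vari{LB_2}=2$. The reason your potential argument cannot save this: the identity $\alpha_{\mathrm{w,w}}(t)-\alpha_{\mathrm{b,b}}(t)=D(t)$ is exact, so every decrease of $\alpha_{\mathrm{w,w}}$ here is already ``paid for'' by a black item moving the signed discrepancy down---there is no unmatched decrease to charge to a size-opening. The quantity you are trying to bound is the total positive variation of $\alpha_{\mathrm{w,w}}$ restricted to opening events, and a single unit of discrepancy can be recycled arbitrarily many times once the bins that absorbed the compensating black items become too full to accept further items. (The instance does not violate the theorem: the algorithm uses $2k$ bins and $\vari{OPT}=2k$; it only violates your decomposition.)

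The structural lesson, which is how the paper's proof proceeds, is that bins must be classified by their \emph{final state} rather than by the \emph{cause} of their opening: in the example above every bin ends with level $0.8\geq 1/2$, so all of them can be charged against $2\cdot\vari{LB_1}$ even though half of them were opened ``because of color.'' The paper groups bins into chains (a big bin, possibly followed by one small bin whose bottom item did not fit into it), so that each chain has average level at least $1/2$ and the chained bins number at most $2\cdot\vari{OPT}$; only the remaining \emph{separated} small bins are charged to the discrepancy, via a carefully chosen time window $[k+1,t]$ ending at the creation of the last separated bin and starting at the last moment with no small black bin. If you want to pursue your route, you would need to replace ``color-openings'' by something like ``bins that are still small and color-blocked everything at the relevant final moment,'' at which point you are essentially reconstructing the paper's argument.
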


\begin{proof}
We use the following notation: An item is \textit{small}
when its size is less than $0.5$ and \textit{big} otherwise.
Similarly \textit{small bins} have level less than $0.5$ and \textit{big bins} have level at least $0.5$.

We assign bins into \textit{chains} --- sequences of bins in which all bins except the last must be big.
If there is only one bin in a chain it must be big.
Moreover, the bottom item in the $i$-th bin of a chain cannot be added into the $(i- 1)$-st bin ---
even if it would have the right color, i.e., it is too big to be put into the $(i- 1)$-st bin.

A bin is contained in at most one chain. We call a bin that is not in a chain a \textit{separated} bin.
We create chains such that all big bins are in a chain and only as few small bins as possible remains separated.
Moreover, our chains can have at most two bins, so the average level
of bins in each chain is clearly at least $0.5$.

It follows that the total number of bins in all chains
is bounded from above by $2 \cdot \vari{OPT}$.
We want to bound the number of separated bins from above by the maximal color discrepancy $\vari{LB_2}$
which yields the $3$-competitiveness of AF.

We define a process of assigning bins into chains. 
We simply try to put as many bins into chains as possible,
but we add a bin into a chain only when the last bin
in the chain has another color than the bottom item of the added bin.

Formally, when an item from the input sequence is added we do the following:
\begin{compactitem}
\item The item is added into a bin in a chain: Nothing happens with chains or separated bins.
\item The item is added into a small separated bin: If the bin becomes big we create a chain from the bin,
otherwise the bin stays separated.
\item The item is big and creates a new bin: The newly created bin forms a new chain.
\item The item is small and creates a new bin:
If there is a chain in which the last bin has an item of another color on the top, i.e., black for a white incoming item and white for a black incoming item,
we add the newly created bin into the chain. (Note that the last bin in the chain must be big,
otherwise the item would fit into the last bin.)
If there is no such chain, the new bin is separated.
\end{compactitem}

Moreover, whenever a chain has two big bins
we split it into two chains, each containing one big bin. 
Therefore each chain is either one big bin, or a big bin and a small bin.
The intuitive reason for splitting chains is that
we can put more newly created small bins into chains.

If there is no separated bin at the end (after the last item is added),
we have created at most $2 \cdot \vari{OPT}$ bins.
Otherwise we define $k$ and $t$ as indexes of incoming items
and show that the color discrepancy of items between the $k$-th and the $t$-th item is at least the number of separated bins at the end.

Let $t$ be the index of an item that created the last bin that is separated when it is created (the $t$-th item must be small).
Suppose w.l.o.g.\ that the $t$-th item is black.
Note that a small item that comes after the $t$-th item can create a bin,
but we put the bin into a chain immediately,
therefore the number of separated bins can only decrease after adding the $t$-th item.

Let $b_i$ be the number of small black bins, i.e., bins with a black item on the top,
and $w_i$ be the number of small white bins
after adding the $i$-th item from the sequence. From the definition of $t$ we know that $w_t = 0$.

We define $k$ as the biggest $i \leq t$ such that $b_i = 0$, i.e., there is no small black bin
(if $b_i > 0$ for all $i \geq 1$ we set $k = 0$).
Clearly the $(k + 1)$-st item must be small and black.
Note that there can be some separated white bins and possibly some other small white bins in chains,
but there is no separated black bin.
Let $W$ be the set of white bins that are separated after adding the $k$-th item.
Before adding the $t$-th item and creating the last bin, all bins in $W$ must have a black item on the top, or become big bins in chains (thus $k \leq t - |W|$).

Let \textit{new items} be items with an index $i$ such that $k < i \leq t$.
We want to bound the number of separated bins after adding the $t$-th item by the color discrepancy.
Note that these bins are small by the process of assigning bins into chains.
We observe that all separated bins must have a black item on the top before adding the $t$-th item
and also all chains have a black item on the top in the last bin,
otherwise the bin created by the $t$-th item would be added in a chain.

Hence for separated bins with a black item at the bottom the number of black items is greater by one than the number of white items.
Separated bins created with a new item must have a black item at the bottom,
since otherwise there cannot be a small black bin and $b_i = 0$ for $k < i < t$.

Separated bins from the set $W$ can have the same number of black and white items before adding the $t$-th item,
but in each such bin there is one more new black item than new white items,
since the first and the last such items are black.

Now we look at new items which are packed into bins that are in chains after adding the $t$-th item.
We call such an item a \textit{link}.
Note that some links can be at first packed into separated bins, but these bins are put into chains before adding the $t$-th item.
It suffices to show the following claim. 

\begin{claim}
In each chain the number of black links is at least the number of white links after adding the $t$-th item.
\end{claim}

\begin{proof}

When the $t$-th item comes and creates a new separated bin, the last item in each chain must be black.
Therefore the claim holds for the chains with only one bin.

For the chains with two bins (the first big and the second small) we observe 
that a bin created with a link has either a black item, or a big white item at the bottom.
If it would have a small white link at the bottom, there cannot be a small black bin and $b_i = 0$ for $k < i < t$
which is a contradiction with the definition of $k$.
Since a big white item starts a new chain, the second bin in a chain cannot have a white link at the bottom.

Moreover, the first link in the second bin of a chain must be black,
because either the second bin was created after the $k$-th item and we use the observation from the previous paragraph,
or it was created before the $k$-th item and then it must had a white item on the top when the $k$-th item came,
since there was no small bin with a black item on the top.

So it cannot happen in a chain with two bins that there are two white links next to each other,
or separated by some items that are not links.
Note that for black links this situation can happen.
Since there must be a link in the second bin and the last such link is black,
the claim holds for chains with one big and one small bin.

The process of assigning bins into chains does not allow chains with more than two bins or with two big bins.
Hence in each chain the number of black links is at least the number of white links.
\end{proof}

Let $s$ be the number of separated bins.
We found out that when we focus on new items, i.e., items with an index $i$ such that $k < i \leq t$,
there is one more such black item than such white items in all separated bins
and at least the same number of such items of both colors in bins in all chains, i.e., links.
Moreover, after the $t$-th item comes $s$ can only decrease, since no separated bin is created.
So we have bounded the value of $s$ at the end from above by the color discrepancy between the $(k + 1)$-st and the $t$-th item:

$$s \leq \left| \sum_{\ell=k + 1}^{t} s_\ell \right| \leq \vari{LB_2}$$

where $s_i$ is $1$ when the $i$-th item is white and $-1$ otherwise. 
Note that some items after the $t$-th item can create a bin,
but such bins are put into chains.
\end{proof}

\subsection{Competitiveness of the Worst Fit Algorithm} \label{sec:WF-BWBP}

The Worst Fit algorithm performs in fact even better when all items are small
which we prove similarly to the proof of Theorem~\ref{thm:AFcomp}.

\begin{theorem}
\label{thm:WFparam}
Suppose that all items in the input sequence have size of at most $1/d$,
for a real $d \geq 2$.  Then Worst Fit is absolutely $(1 +
d/(d-1))$-competitive for Black and White Bin Packing.
\end{theorem}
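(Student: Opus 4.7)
The plan is to extend the chain-based argument of Theorem~\ref{thm:AFcomp} by exploiting both the smallness of items and a specific property of Worst Fit. First I redefine \emph{big} bins as those with level strictly greater than $(d-1)/d$ and \emph{small} bins as those with level at most $(d-1)/d$; since every item has size at most $1/d$, a small bin accommodates any incoming item by size, as $(d-1)/d + 1/d = 1$. The chain-assignment process from Theorem~\ref{thm:AFcomp} is reused verbatim, except that the rule splitting a chain once it accumulates two big bins is dropped, so chains may grow to arbitrary length.

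Because each non-last bin of a chain has level strictly greater than $(d-1)/d$, a chain of length $k$ carries total level greater than $(k-1)(d-1)/d$. Summing over all $m$ chains bounds the number of chain bins by less than $m + \frac{d}{d-1}\,\vari{LB_1} \leq m + \frac{d}{d-1}\,\vari{OPT}$. It thus suffices to show that $m + s \leq \vari{LB_2} \leq \vari{OPT}$, where $s$ is the number of separated bins at the end; combining the two bounds yields the claimed absolute $(1 + d/(d-1))$-competitiveness.

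The driving Worst-Fit-specific observation is: whenever WF opens a new bin with an item $x$ of color $c$, every pre-existing bin either has top color $c$ or is big with top color different from $c$. Indeed, any small bin with top color different from $c$ would accept $x$ by both color and size, contradicting the opening of a new bin. In particular, at every moment a bin is created that becomes separated, all small bins share its top color, and every chain's last bin with a differently colored top is necessarily big.

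I then adapt the discrepancy argument of Theorem~\ref{thm:AFcomp}: let $t$ be the index of the last item that creates a separated bin, w.l.o.g.\ of color black, and let $k$ be the largest $i \leq t$ such that no bin with top color black is small. The WF observation above, combined with the definition of $k$, should force each bin that is small with a black top at time $t$, and each chain as a whole, to contain at least one more black than white item from the window $[k+1, t]$; summing these contributions yields $m + s \leq \vari{LB_2}$. The principal obstacle I expect is re-establishing the ``link'' inequality---that in every chain the number of black links is at least the number of white links---because the original argument relies crucially on chains having at most two bins, namely one big followed by one small. Extending the bookkeeping to chains of unbounded length requires an inductive argument over the chain, using the WF observation at the creation of each chain bin to control the alternation of link colors inside the chain and to rule out the emergence of white-link-dominated chains.
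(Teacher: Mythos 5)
Your plan has a genuine gap at its central step, the inequality $m+s\le \vari{LB_2}$, and the route you sketch for proving it fails. Because a chain consisting of a single big bin is charged entirely to the additive term $m$ rather than to the volume bound, you are forced to claim that \emph{every} chain carries a surplus black item from the window $[k+1,t]$. That is false: a chain can be complete (its last bin big and full) before the window opens and receive no new items at all. Concretely, for $d=2$ consider a black item of size $0.4$, a white item of size $0.5$, a black item of size $0.1$, and a black item of size $\epsilon$. The first three items form a single full one-bin chain; the fourth opens a separated bin; here $t=4$, $k=3$, the window $[k+1,t]$ contains one black item and has discrepancy $1$, yet $m+s=2$. (The inequality $m+s\le\vari{LB_2}$ happens to hold here via a different segment, but your charging scheme does not establish it.) A second problem is that the observation you label ``Worst-Fit-specific''---that at the moment a new bin is opened every small bin shares the incoming item's top color---holds for \emph{every} Any Fit algorithm, since any of them opens a bin only when forced to. The property that actually distinguishes WF from FF and BF (which are only $3$-competitive) is that WF routes each item to the \emph{lowest-level} admissible bin, so it never places a new white item into a big bin while a small black-topped bin exists; this is what makes the matching of new white items against new black items inside big bins go through, and your sketch never isolates or uses it. Finally, you yourself flag the link inequality for unbounded chains as unresolved, so the argument is incomplete as written.

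For comparison, the paper's proof dispenses with chains entirely: it splits the WF bins into big bins of level at least $(d-1)/d$, whose number is at most $\frac{d}{d-1}\vari{LB_1}\le\frac{d}{d-1}\vari{OPT}$ by volume alone, and small bins, whose number is bounded by $\vari{LB_2}$ via the $k,t$ window argument combined with the lowest-level property above. Note also that if you tightened your volume accounting so that only chains whose \emph{last} bin is small incur the additive $+1$ (chains of all-big bins need none), the quantity you would have to bound by $\vari{LB_2}$ becomes exactly the number of small bins at the end, and your approach collapses to the paper's, with the chain machinery contributing nothing.
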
	

\begin{proof}
Let $\vari{OPT}$ be the number of bins used in an optimal packing.
We divide bins created by WF into sets $B$ (big bins) and $S$ (small bins).
Each \textit{big bin} has level at least $(d - 1)/d$, thus $|B| \leq d/(d-1) \cdot \vari{OPT}$.
\textit{Small bins} are smaller than $(d - 1)/d$, thus they can receive any item of the right color.
We show that $|S|$ is bounded by the maximal color discrepancy $\vari{LB_2}$ and we obtain that WF is $(1 + d/(d-1))$-competitive.

As items are arriving, we count the number of small black bins, i.e.,
bins with a black item on the top and with level less than $(d - 1)/d$.
Let $b_i$ be the number of small black bins after adding the $i$-th item from the sequence.
Similarly let $w_i$ be the number of white bins with level less than $(d - 1)/d$ after adding the $i$-th item. 

If $b_n = 0$ and $w_n = 0$, i.e., there is no small bin at the end, WF created at most $d/(d-1) \cdot \vari{OPT}$ bins.
Otherwise suppose w.l.o.g.\ that the last created bin has a black item at the bottom.
Let $t$ be the index of the black item that created the last bin.
It holds that $w_t = 0$, since otherwise the $t$-th item would go into a small white bin.

Let $k$ be the last index smaller than $t$ for which $b_k = 0$ (if $b_i > 0$ for all $i\in\{1, \dots, t \}$, we set $k = 0$).
The $(k + 1)$-st item must be black.
We observe that any bin created after this point has a black item at the bottom,
otherwise $b_i = 0$ for some $i$ such that $k < i < t$.
Note that $w_k$ can be greater than $0$, i.e., there can be some small white bins
and the $(k + 1)$-st item goes into one of them.
Let $W$ be the set of these bins.
Before adding the $t$-th item and creating the last bin, all bins in $W$ must have a black item on the top, or become big bins (thus $k \leq t - |W|$).

Let \textit{new items} be items with an index $i$ such that $k < i \leq t$.
We want to bound the number of small bins after adding the $t$-th item by the color discrepancy.
We already observed that all these bins must have a black item on the top.
Hence for small bins with a black item at the bottom the number of black items is greater by one than the number of white items.
Small bins from the set $W$ can have the same number of black and white items,
but in each such bin there is one more new black item than new white items, since the first such item is black.

Now we look at new items which are packed
into bins that are big after the $t$-th item comes.
It suffices to show that the number of such new black items is at least the number of such new white items.
We observe that WF packs any new white item into a small bin,
otherwise $b_\ell = 0$ for some $\ell$ such that $k < \ell < t$.
Hence any new white item must be packed into a bin created after the $k$-th item (therefore with a new black item at the bottom),
or into a bin from the set $W$.
Since the first item that falls into a bin from $W$ after the $k$-th item is black, our claim holds.

Note that this matching of black and white items in big bins would fail
for algorithms like Best Fit or First Fit, since they can put
a white item into a big bin created before the $k$-th item and not contained in $W$.

We found out that when we focus on new items, i.e., items with an index $i$ such that $k < i \leq t$,
there is one more such black item than such white items in all small bins
and at least as many such black items as such white items in all big bins.
Moreover, after the $t$-th item comes the number of small bins $|S|$ can only decrease,
since no bin is created.
So we bound $|S|$ from above by the color discrepancy
between the $(k + 1)$-st and the $t$-th item:
$$|S| \leq \left| \sum_{\ell=k + 1}^{t} s_\ell \right| \leq \vari{LB_2}$$
where $s_i$ is $1$ when the $i$-th item is white and $-1$ otherwise.

Note that the last bin is already counted in the color discrepancy,
since its bottom item is black and has index $t$.
\end{proof}

\section*{Conclusions and Open Problems}

The Colored Bin Packing for zero-size items is completely solved.

For items of arbitrary size, our online algorithm still leaves
a gap between our lower bound $2.5$ and our upper bound of $3.5$.
The upper bounds are only
$0.5$ higher than for two colors (Black and White Bin Packing) where a
gap between $2$ and $3$ remains for general items.

Classical algorithms FF, BF and WF, although they maintain a constant
approximation for two colors, start to behave badly when we introduce
the third color.  For two colors, we now know their exact behavior.
In fact, all Any Fit algorithms are absolutely
$3$-competitive which is a tight bound for FF, BF and WF. However, for
items of size up to $1/d$, $d \geq 2$, FF and BF remain
$3$-competitive, while WF has the absolute competitive ratio $1 + d/(d-1)$.
Thus we now know that
even the simple Worst Fit algorithm matches the
performance of Pseudo, the online algorithm with the best competitive
ratio known so far.
It is also an interesting question whether it
holds that Any Fit algorithms cannot be better than $3$-competitive
for two colors.

{\small

\bibliographystyle{abbrv}
\bibliography{BWBP}
}

\end{document}